\def\oggi{4.12.12}
\numberwithin{equation}{section}
\newtheorem{hyp}{Assumption}
\newtheorem{teo}{Theorem}[section]
\newtheorem{defi}[teo]{Definition}
\newtheorem{definition}[teo]{Definition}
\newtheorem{proposition}[teo]{Proposition}
\newtheorem{lemma}[teo]{Lemma}
\newtheorem{corollary}[teo]{Corollary}
\newtheorem{remark}[teo]{Remark}
\def\B{{\mathcal B}}
\def\H{{\mathcal H}}
\def\I{{\mathcal I}}
\def\L{{\mathcal L}}
\def\Y{{\mathcal Y}}
\def\U{{\mathcal U}}
\def\Q{{\mathcal Q}}
\def\phib{{\bf \zeta}}
\font\strana=cmti10
\def\lie{\hbox{\strana \char'44}}
\def\bF{\textbf{F}}
\def\bx{\textbf{x}}
\def\bn{\textbf{n}}
\def\by{\textbf{y}}
\def\be{\textbf{e}}
\def\bY{\textbf{Y}}
\def\length#1{\ell(#1)}
\def\bL{\textbf{L}}
\def\M{\mathcal{M}}
\def\G{\mathcal{G}}
\def\N{\mathcal{ND}}
\DeclareMathOperator{\dist}{dist}
\title{Asymptotic behavior of an elastic satellite\\with internal friction}
\author{Emanuele Haus, Dario Bambusi}
\begin{document}

\date{\oggi}
\maketitle

\begin{abstract}
We study the dynamics of an elastic body whose shape and position
evolve due to the gravitational forces exerted by a pointlike
planet. The main result is that, if all the deformations of the satellite
dissipate some energy, then under a suitable nondegeneracy condition
there are only three possible outcomes for the dynamics: (i) the orbit
of the satellite is unbounded, (ii) the satellite falls on the planet,
(iii) the satellite is captured in synchronous resonance i.e. its
orbit is asymptotic to a motion in which the barycenter moves on a
circular orbit, and the satellite moves rigidly, always showing the
same face to the planet. The result is obtained by making use of
LaSalle's invariance principle and by a careful kinematic analysis
showing that energy stops dissipating only on synchronous orbits. We
also use in quite an extensive way the fact that conservative
elastodynamics is a Hamiltonian system invariant under the action of
the rotation group. 
\end{abstract}

\section{Introduction}

In this paper, we study the dynamics of an elastic satellite
interacting with a pointlike planet. Precisely, we study the dynamics
of an elastic body, moving in the gravitational field generated by a
pointlike mass.  We consider the equations of motion of continuum
mechanics, with body forces due to the gravitational fields and
internal traction arising from the body deformation, without
introducing any further approximation. We prove that, if the internal
structure of the satellite is such that any deformation dissipates
some energy and if a suitable nondegeneracy condition is satisfied,
then the dynamics of the system has only three possible final
behaviors:
\begin{itemize}
\item[(i)] the orbit of the satellite is unbounded;
\item[(ii)] the satellite falls on the planet;
\item[(iii)] the satellite is captured in synchronous resonance.
\end{itemize}
By item (iii) we mean that the shape of the body reaches a final
configuration, that its center of mass moves on a circular orbit and
that it always shows the same face to the planet, i.e. the planet is
at rest in a frame comoving with the satellite.

Concerning the inner structure of the body we make as few assumptions
as possible. Precisely we assume that the stress tensor is the sum of
two terms, the first one being Hamiltonian, i.e. it is the $L^2$
gradient of a ``stored energy functional'', and a second one being
nonconservative. On the second term we only assume that, as a
consequence of its presence, there is dissipation of energy at any
time at which the time derivative of the Cauchy Green stress tensor does not
vanish. 

The idea of the proof is to use LaSalle's principle (see
\cite{lasalle}), which is a generalization of Lyapunov
theorem. LaSalle's principle ensures that any precompact orbit
approaches an invariant set which is contained in the manifold where
the Lie derivative of the energy vanishes. The core of the paper
consists in characterizing such an invariant set. Since in such an
invariant set the dynamics is conservative, it turns out that a
convenient framework for our study is that of Hamiltonian systems with
symmetry as developed for example in \cite{marsdenhughes} or, in a
form directly useful for our problem, in \cite{SPM91}.

So we start by writing down the Lagrangian and the Hamiltonian of the
conservative part of the system and then we add to the equations of
motions the nonconservative forces.

Then we start analyzing the nondissipating manifold $\N$. We first
prove that $\N$ consists of rigid motions, and then we show that the
motions laying on $\N$ are actually circular orbits. Finally we show
that they are relative equilibria of the reduced Hamiltonian system
obtained by exploiting the rotational invariance of the original
Hamiltonian. At this point the application of LaSalle's principle
would allow to conclude that the orbit is asymptotic to a manifold
obtained by taking the union of all the synchronous orbits. In order
to prove that the system is actually asymptotic to a single
synchronous orbit we exploit the conservation of angular momentum and
we assume a nondegeneracy property stating that the relative
equilibria are isolated. This property is discussed in detail in
Section \ref{assu.deg} and we show that it is typically fulfilled.

\vskip15pt

The present result still has some quite strong limitations. The main
one is that we do not discuss existence and uniqueness for the Cauchy
problem of the equations we study. As is well known this is one of the
main open problems of elasticity, so we do not enter its
discussion. Here we limit ourselves to assuming that the system we study is well
posed. Our assumptions are true for example for any Galerkin cutoff of
the system or if one can show that the system behaves as a parabolic
system (as is expected in the dissipative case under consideration). 

The second limitation of our result rests in the fact that we assume
that the system is described by differential equations. This means
that we do not consider the case where the system is described by an
integrodifferential equation with delay, a case which can occur in
elasticity. The case we have in mind is the one in which the
dissipation is of the kind of that appearing in Navier Stokes
equation. We expect that our theory can be extended to the case with
delay, but for sure the methods should be adapted.

\vskip15pt

The study of the gravitational interaction between a deformable body
and a pointlike mass traces its origin back to the pioneering work by
Darwin \cite{darwin1,darwin2}. His work shows that, in some
approximation, the effect of the internal dynamics of the satellite is
just that of producing an effective dissipating effective force on the
orbital and spin degrees of freedom of the satellite. Darwin's work
was subsequently generalized by Kaula \cite{kaula} and many other
authors (for instance,
\cite{goldreich,macdonald,peale,alexander}). Critical reviews of the
work by Darwin, Kaula and followers can be found in
\cite{ferrazmello,efroimsky1,efroimsky2}. However, the Darwin-Kaula
procedure is heuristic and, from a mathematical point of view, its
range of validity is far from being clear. For this reason, in the
present paper (as in \cite{bambusihaus}), the point of view is that of
starting from first principles in order to obtain a rigorous
mathematical proof of the phenomena under consideration.

\vskip10pt We remark that the result of the present paper rules out
the possibility that periodic orbits different from synchronous
resonance exist. This is quite surprising, since some celestial bodies
are known to be in a spin-orbit resonances different from the
synchronous one (for example Mercury). Our interpretation is that
orbits like the 3:2 resonance of Mercury are very likely to be either
metastable or orbits eventually impacting the planet. We plan to
investigating further in this direction.

\vskip10pt

In Section \ref{setting} we state the main result of the present
paper: to this end, we recall the Lagrangian formalism for
elastodynamics, we write down the related Cauchy problem and we
formulate the nondegeneracy assumption. Section \ref{sec.3} is devoted
to the proof of the main result: we recall the statement of LaSalle's
invariance principle, prove that the only solutions which dissipate
no energy are synchronous orbits and apply La Salle principle to our
system. Finally, in Section \ref{assu.deg} we discuss the
nondegeneracy assumption and we prove that typically it is fulfilled.

\section{Statement of the main result}
\label{setting}

\subsection{The setting}
We study the dynamical system consisting of:
\begin{itemize}
\item[(i)] a pointlike mass $M$ (which we will sometimes call
  ``planet''), which is at rest and which is chosen as the origin of a
  system of coordinates; 
\item[(ii)] an elastic body, free to move in space; we will call this
  extended body ``satellite''.
\end{itemize}

To deal with elastodynamics we use the framework of \cite{marsdenhughes} and
\cite{SPM91} from which we take some notations and formalism, that we
now recall.

We denote by $\B\subset\Re^3$ the reference configuration of an
elastic body and assume that $\B$ is open and bounded with a smooth
boundary $\partial \B$. We define the configuration space $\Q$ to be a
Banach space of maps\footnote{Actually we should restrict ourselves to
  the manifold of the maps s.t.  $\det{D\zeta}>0$, however we will
  consider this as a condition on the domain of definition of the
  system.}  $\phib:\B\to\Re^3$.
Typically in elastodynamics one assumes $\Q\subset H^{s}(\B)$ with $s$
large enough; we will come back later to this point, for the moment we
simply assume that $\phib$ admits as many derivatives as needed.  

In the conservative case, classical three dimensional elasticity is a
Lagrangian system, the Lagrangian $\L:T\Q\to\Re$ is the
difference of kinetic and potential energy. In our case there are also
some dissipative forces that will be added to the Lagrange
equations. As usual $T\Q\simeq\Q\oplus\Q$ is the tangent bundle to
$\Q$.

We start by writing down the conservative part of the system. 
The Lagrangian $\L$ of the system is defined by
\begin{equation}
\label{lag}
\L=K-U_g-U_{sg}-U_e\ ,
\end{equation}
where 
\begin{align}
\label{kin}
K(\dot\zeta):=\frac{1}{2}\int_{\B}\rho_0(\bx)
\left|\dot\zeta(\bx)\right|^2d^3\bx
\\
\label{ug}
U_g(\zeta):=\int_{\B}\rho_0(\bx){V_g(\zeta(\bx))d^3\bx}\ ,
\\
\label{usg}
U_{sg}(\zeta):=\int_{\B}\rho_0(\bx){V^{\zeta}_{sg}(\zeta(\bx))d^3\bx}\ ,
\\
\label{ue}
U_e(D\zeta):=\int_{\B}{W(\bx, D\zeta(\bx))d^3\bx}\ ,
\end{align}
the functions $V_g$, $V_{sg}^{\zeta}$ are defined by
\begin{align}
\label{vg}
V_g(\chi):=-\frac{kM}{|\chi|}\ ,
\\
\label{vsg}
V_{sg}^{\zeta}(\chi):=-\int_{\B}{\frac{k\rho_0(\bx)}{|\zeta(\bx)-\chi|}d^3\bx}\ ,
\end{align}
and $W$ is the stored energy
function; $k$ is the universal gravitational constant, and $\rho_0\in
C^\infty(\B)$ the density of the body in the reference
configuration\footnote{Of course we assume that $\rho_0(\bx)\not=0$
  $\forall \bx\in\B$.}. The stored energy function is assumed to
depend on $\zeta$ only through the deformation gradient $\bF:=
D\zeta\equiv\{\partial \zeta^i/\partial x^a\}$. We assume that $W$ is
frame independent in the sense that
\begin{equation}
\label{2.7}
W(\bx,\bF)=W(\bx, R\bF)\ \text{for\ all}\ R\in SO(3)\ .  
\end{equation}

As shown in \cite{SPM91} this implies that the Kirchoff stress tensor,
namely 
\begin{equation}
\label{tauij}
\tau^i_j=\frac{\partial \zeta^i}{\partial
x^a}\frac{\partial W}{\partial(\partial \zeta^j/\partial
x^a)}\qquad \ (\text{sum\ over}\ a\ \text{understood})\ ,
\end{equation}
is symmetric.

Assuming the stress free boundary condition, namely
\begin{equation}
\label{stress.free}
\left.\frac{\partial W}{\partial(\partial \zeta^j/\partial
x^a)} n_a \right|_{\partial \B}=0\ ,
\end{equation}
where $\bn\equiv(n_1,n_2,n_3)$ is the external normal to $\partial\B$,
one deduces the standard Lagrange equations:
\begin{align}
\label{lag.eq}
\rho_0\ddot \zeta &=-\nabla_{\zeta}\L\equiv -\rho_0\frac{\partial
  V_g}{\partial \chi}(\zeta)-\rho_0\frac{\partial
  V_{sg}^{\zeta}}{\partial \chi}(\zeta)+ \frac{\partial}{\partial
x^a}\frac{\partial W}{\partial(\partial \zeta/\partial
x^a)}\ ,
\end{align}
where $\nabla_{\zeta}\L\equiv
(\nabla_{\zeta^1}\L,\nabla_{\zeta^2}\L,\nabla_{\zeta^3}\L) $ is as
usual the gradient with respect to the $L^2$ scalar
product\footnote{i.e. it is defined by $d\L(\zeta) h=\langle\nabla
  \L(\zeta);h\rangle_{L^2}$ for all $h\in\Q$.} and
coincides with the expression at r.h.s. of \eqref{lag.eq}.

\begin{remark}
\label{rot.inv1}
It is easy to check that, if a function $U(\zeta)$ is rotation
invariant, i.e. $U(R\zeta)=U(\zeta)$, $\forall R\in SO(3)$, then
\begin{equation}
\label{rot.pot}
[\nabla U](R\zeta)= R\nabla U(\zeta)\ .
\end{equation}
All the terms of the Lagrangian have this property.
\end{remark}

Since the Lagrangian is independent of time, the energy 
\begin{equation}
\label{ham}
\H=K+U_g+U_{sg}+U_e\ ,
\end{equation}
is formally conserved for the system (\ref{lag.eq}) with the
boundary conditions \eqref{stress.free}.
 
Furthermore, since the Lagrangian is invariant under the group action 
\begin{equation}
\label{group}
\Q\times SO(3)\ni(\zeta,R)\mapsto R\zeta\ ,
\end{equation}
by N\"other's theorem the quantity
\begin{equation}
\label{L}
\bL:=\int_{\B}\zeta(\bx)\times
\rho_0(\bx)\dot\zeta(\bx) d^3\bx\ , 
\end{equation}
 is conserved for the system. 
 Of course $\bL$ coincides with the total angular momentum. 

In order to get the equations governing the non conservative dynamics
one has simply to add the nonconservative forces\footnote{by this
  notation we mean that $G$ is a function of the functions $\zeta$,
  $\zeta_t$, not of their value $\zeta(\bx)$, $\zeta_t(\bx)$, so it
  can also depend on an arbitrary number of derivatives of such
  functions.}, $G=G(\zeta,\zeta_t)$ i.e. to substitute equation
\eqref{lag.eq} with the equation
\begin{equation}
\label{non.lag}
\rho_0\ddot \zeta =-\nabla_{\zeta}\L-G\ .
\end{equation}

In order to write down the precise assumptions on $G$
(which will be given in the next subsection) we have also to introduce
the (right) Cauchy Green deformation tensor $C:=(D\zeta)^TD\zeta$ or,
componentwise
\begin{equation}
\label{Cauchy}
C^{ij}=\sum_a\frac{\partial \zeta^i}{\partial x^a}\frac{\partial
  \zeta^j}{\partial x^a}\ . 
\end{equation}
As it is well known $C$ is symmetric, positive definite and allows to
write $D\zeta$ in the polar decomposition form
$D\zeta(\bx)=R(\bx)\sqrt{C(\bx)}$, where $R(\bx)\in SO(3)$ is a
rotation matrix.

\subsection{The Cauchy problem}
\label{sec:cauchy}

In the following we will always denote by
$y\equiv(\zeta,\dot \zeta)\in T\Q\simeq
\Q\oplus \Q$ a point in the space of
initial data for our system (which we keep distinct from the phase
space, in which the velocities will be substituted by the momenta). 

The problem of existence of solutions for the system \eqref{lag.eq},
\eqref{non.lag} with the boundary conditions \eqref{stress.free} is a
major problem of elastodynamics (see e.g. \cite{ball}). Here, we do not
want to enter such problem, so \textit{we will limit ourselves to
  assuming the needed well-posedness properties}. Furthermore we will
only consider classical solutions, so we give the following definition:

\begin{definition}
\label{sol}
Given $y_0\in T\Q$, a positive $T$ and a function $y\in
C^2((0,T);T\Q)$ we say that it is a solution of the system
\eqref{non.lag}, \eqref{stress.free} with initial
datum $y_0$, if it fulfills the equations and the boundary conditions
for all $t\in(0,T)$ and one has
$$
\lim_{t\to0^+}y(t)=y_0\ .
$$
\end{definition}

Then we also need the following definitions:
\begin{defi}\label{ni}
A solution is said to be impacting (in the future) if
$\inf_{t>0}\dist(\zeta(\B,t),0)=0$.
\end{defi}
\begin{defi}\label{ns}
A configuration is said to be non singular if $\det [D\zeta]>0$. 
\end{defi}

\begin{definition}
\label{reg.sol} 
A solution $y(t)$, $t\in(0,T)$ is said to be regular if it is non
impacting and the corresponding configuration is non singular for all
times.  An initial datum $y_0\equiv(\zeta_0,\dot \zeta_0)$ is regular
if $\dist(\zeta_0(\B),0)>0$ and $\det [D\zeta_0]>0$.
\end{definition}

\begin{definition}
\label{precomp}
A regular solution $y(t)$, $t\in(0,\infty)$ is said to be precompact
if, for any increasing sequence $\{t_n\}\subset(0,\infty) $ there
exists a subsequence $\{t_{n_k}\} $ s.t. the limit
$\lim_{k\to+\infty}y(t_{n_k})$ exists and the limit is
regular.\end{definition}

\begin{hyp}\label{hyp:cauchy.problem}
We assume that
\begin{itemize} 
\item[(i)] for all regular initial data $y_0\in T\Q$, the Cauchy
  problem for the system \eqref{non.lag} with the
  boundary conditions \eqref{stress.free} is locally well-posed;
\item[(ii)] let $y(t)$ be a non-impacting solution. Then its time of
  existence is infinite and it is forever non singular.
\item[(iii)] Any regular solution fulfilling
  $\sup_{\bx\in\B,t>0}|\zeta(\bx,t)|<\infty$ is precompact;

\item[(iv)] the angular momentum $\bL$ is conserved along the
  solutions; the Lie derivative of the energy \eqref{ham} is
  nonpositive and vanishes if and only if $\dot C=0$, where $C$ is the
  Cauchy Green tensor (cf. eq. \eqref{Cauchy}).
\end{itemize}
\end{hyp}

\begin{remark}
\label{galerkin}
If one thinks that the nonconservative equations we are studying
behave like parabolic equations, then typically the situation of
Assumption \ref{hyp:cauchy.problem} is fulfilled. Indeed, for positive
times the solution of a parabolic equation is contained in all Sobolev
spaces. Existence of global solutions has been proved for some equations of viscoelasticity (see \cite{KaShi92} and \cite{LLZ08}). Moreover, the property of having a compact attractor is known
for many classes of evolution equations involving dissipation (see
\cite{temam}).
\end{remark}


\begin{remark}
\label{gale2}
A situation in which Assumption \ref{hyp:cauchy.problem} is fulfilled
is that in which the space $\Q$ is finite dimensional. A typical
situation we have in mind is that in which the space $\Q$ is composed
by maps obtained by Galerkin cutoff from some of the maps belonging to the
original infinite dimensional configuration space. For example one
could decide to keep only a finite (arbitrarily large) number of
spherical harmonics of the maps describing the configuration.
\end{remark}

\subsection{The nondegeneracy assumption}\label{n.deg} In the
following (see subsect. \ref{sec:char})
we will prove that the nondissipating orbits are relative equilibria of
the Hamiltonian system obtained by Legendre transforming the
Lagrangian \eqref{lag}. We are now going to recall the notion of
relative equilibrium and to state the nondegeneracy condition we need.

 Define the momentum 
\begin{equation}
\label{momenta}
\pi:=\rho_0\dot \zeta\ ,
\end{equation}
then the Hamiltonian of the system coincides with the function $\H$
given by \eqref{ham}, where however
\begin{equation}
\label{k}
K:=\int_{\B}\frac{|\pi(\bx)|^2}{2\rho_0(\bx)}
d^3\bx \ ;
\end{equation}
and the Lagrange equations \eqref{lag.eq} are
equivalent to the Hamilton equations of \eqref{ham}.

The Hamiltonian is invariant under the action of the symmetry group
$SO(3)$ defined by
\begin{equation}
\label{sym.ham}
Rz\equiv R(\pi,\zeta):=(R\pi,R\zeta)\ ,
\end{equation}
and the total angular momentum $\bL$ (written in terms of positions
and momenta) is the corresponding conserved quantity.
Then one can use Marsden-Weinstein reduction procedure, that can be
summarized as follows.
\begin{itemize}
\item[(1)] Fix a value $\bL_0$ of $\bL$ and consider the manifold 
$$
\M_{\bL_0}:=\left\{z\equiv (\pi,\zeta)\ :\ \bL(z)=\bL_0
\right\}\ ; 
$$
\item[(2)] Consider the subgroup $\G_{\bL_0}\subset SO(3)$ leaving
  invariant $\M_{\bL_0}$, namely the group of the rotations around the
  axis $\bL_0$. Consider the quotient manifold
  $\M_{\bL_0}/\G_{\bL_0}$. Such a manifold has a natural symplectic
  structure and furthermore the Hamiltonian $\H$, as well as its
  Hamilton equations, pass to the quotient and define a Hamiltonian
  system on $\M_{\bL_0}/\G_{\bL_0}$.
\end{itemize}

We denote by $\H_{\bL_0}$ the Hamiltonian of such a reduced system. 

\begin{definition}
\label{red.eq}
The critical points of  $\H_{\bL_0}$ are called relative equilibria of
the Hamiltonian system $\H$, at angular momentum $\bL_0$.
\end{definition}

\begin{definition}
\label{nondeg}
A relative equilibrium is said to be topologically nondegenerate if it
is not an accumulation point of relative equilibria with the same
angular momentum.
\end{definition}

By abuse of notation a representative $z_e$ of the equivalence class of a
relative equilibrium is also called a relative equilibrium of $\H$.

\begin{remark}
\label{r.11}
It is well known (see e.g. \cite{abrahammarsden}) that $z_e$ is a
relative equilibrium if and only if the Hamiltonian vector field of
$\H$ at $z_e$ is tangent to the orbit of $SO(3)$ through $z_e$. 
\end{remark}
\begin{remark}
\label{r.12}
If $z_e$ is a relative equilibrium then the corresponding orbit $z(t)$
(under the flow of the Hamiltonian system $\H$) is formed by relative
equilibria. In the nondegenerate case there are no other relative
equilibria with the same angular momentum in a neighborhood of the
orbit $\cup_tz(t)$.
\end{remark}

Of course a relative equilibrium $z_e$ corresponding to a value ${\bf
  L}_0$ of $\bL$ is topologically nondegenerate if and only if the
same is true for the relative equilibrium $Rz_e$, where $R\in SO(3)$
is arbitrary.

\begin{definition}
\label{tndp}
A value $\ell\in\Re$ of the modulus of the angular momentum is said to
be nondegenerate if all the relative equilibria with angular momentum
${\bf L}$ satisfying $|{\bf L}|=\ell$ are topologically
nondegenerate relative equilibria of $\H$.
\end{definition}

\begin{remark}
\label{rem.non.deg}
In Section \ref{assu.deg} we will comment on this condition and
show that it is in general fulfilled. 
\end{remark}

\begin{teo}\label{main}
Let $y(t)$ be a solution of eq. \eqref{non.lag} with the boundary
condition \eqref{stress.free}, and let $\ell$ be the corresponding
value of the modulus of the angular momentum. Assume that $\ell$ is
nondegenerate, then, one of the following three (future) scenarios is
possible:
\begin{itemize}
\item[(i)] the trajectory of $\B$ is unbounded;
\item[(ii)] the solution impacts the planet;
\item[(iii)] the solution is asymptotic to a synchronous
  non-dissipating orbit, which is a relative equilibrium with angular
  momentum $\ell$.
\end{itemize}
\end{teo}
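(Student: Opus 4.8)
The plan is to combine LaSalle's invariance principle with a kinematic characterization of the set $\N$ on which no energy is dissipated. The starting point is the trichotomy built into the hypotheses: either the trajectory of $\B$ is unbounded (case (i)), or it impacts the planet (case (ii)), or it is bounded and non-impacting. In the last case, Assumption \ref{hyp:cauchy.problem}(ii) gives that the solution is global and forever non-singular, and then Assumption \ref{hyp:cauchy.problem}(iii) gives that $y(t)$ is precompact in the sense of Definition \ref{precomp}. So from now on I work under the assumption that $y(t)$ is a precompact global regular solution, and I must show it is asymptotic to a synchronous relative equilibrium with $|\bL| = \ell$.

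The next step is to invoke LaSalle: by Assumption \ref{hyp:cauchy.problem}(iv) the energy $\H$ is a non-increasing function along the flow (it is a Lyapunov function), so the $\omega$-limit set $\omega(y)$ is non-empty, compact, invariant, and contained in the \emph{nondissipating manifold} $\N := \{y : \dot C = 0\}$; moreover $\H$ is constant on $\omega(y)$. The dynamics restricted to $\omega(y)$ is conservative (no energy is lost there), so on $\omega(y)$ the motion is governed by the Lagrange equations \eqref{lag.eq} alone. The crucial kinematic input — which the paper announces will be proved in Section \ref{sec.3} — is that $\dot C \equiv 0$ along a conservative orbit forces the motion to be \emph{rigid}, i.e.\ $\zeta(\bx,t) = \bR(t)\zeta(\bx,0) + \bd(t)$ for some $\bR(t)\in SO(3)$ and translation $\bd(t)$; and then, feeding rigidity back into the equations of motion together with conservation of energy and of $\bL$, one shows that the barycenter moves on a circular orbit and the body rotates synchronously, always showing the same face to the planet. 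I will assume this characterization (it is the content of the "careful kinematic analysis" and of subsection \ref{sec:char}); its upshot is that every point of $\omega(y)$ lies on a synchronous orbit, and each such synchronous orbit is a relative equilibrium of $\H_{\bL_0}$ in the sense of Definition \ref{red.eq}, with $|\bL_0| = \ell$ by conservation of angular momentum.

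At this stage LaSalle alone only yields that $y(t)$ approaches the \emph{union} of all synchronous relative equilibria with $|\bL| = \ell$; the final task is to upgrade this to convergence to a \emph{single} such orbit. Here I use the nondegeneracy assumption: since $\ell$ is nondegenerate (Definition \ref{tndp}), every relative equilibrium $z_e$ with $|\bL(z_e)| = \ell$ is topologically nondegenerate (Definition \ref{nondeg}), i.e.\ isolated among relative equilibria of the same angular momentum. The set $\omega(y)$ is connected (being the $\omega$-limit set of a precompact orbit), it is invariant, and it is entirely contained in the set of these isolated relative-equilibrium orbits. Passing to the reduced space $\M_{\bL_0}/\G_{\bL_0}$, the image of $\omega(y)$ is a connected invariant set consisting of critical points of $\H_{\bL_0}$, each of which is isolated; hence it reduces to a single point, which means $\omega(y)$ is a single $SO(3)$-orbit — more precisely a single $\G_{\bL_0}$-orbit of a synchronous relative equilibrium, which is exactly one synchronous non-dissipating orbit. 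Therefore $y(t)$ is asymptotic to that orbit, giving case (iii).

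\medskip

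\noindent\textbf{Main obstacle.} The step I expect to be hardest — and which is really the heart of the paper rather than of this proof skeleton — is the kinematic rigidity argument: showing that $\dot C = 0$ along a \emph{conservative} solution forces the motion to be rigid \emph{and then} that rigidity plus the gravitational coupling forces synchronicity and circularity. The subtlety is that $\dot C = 0$ at a single instant only says the deformation is momentarily frozen; one must use that it holds for all $t$ together with the full PDE \eqref{lag.eq} (not just kinematics) to rule out, e.g., non-circular rigid orbits or rigid motions with the body tumbling relative to the planet. Controlling this in the infinite-dimensional setting, while only assuming the regularity granted by Definition \ref{sol} and Assumption \ref{hyp:cauchy.problem}, is where the care is needed; the LaSalle and nondegeneracy parts are then comparatively soft.
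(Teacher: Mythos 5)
Your overall architecture coincides with the paper's: the trichotomy (unbounded / impacting / bounded and regular, hence precompact by Assumption \ref{hyp:cauchy.problem}(ii)--(iii)), LaSalle's principle placing the $\omega$-limit set in the largest invariant subset $\I$ of $\N$, and finally nondegeneracy of $\ell$ together with connectedness of the $\omega$-limit set to pass from the union of synchronous relative equilibria to a single one --- this last step is exactly the paper's Section \ref{end}. However, there is a genuine gap: the statement you explicitly ``assume'' --- that $\dot C\equiv 0$ along an invariant orbit forces rigid motion \emph{and} synchronicity (planet at rest in the comoving frame, constant angular velocity, hence a relative equilibrium) --- is the mathematical core of the proof (subsection \ref{sec:char}), and you supply no argument for it. Moreover, your one-line sketch of how it would go (``feeding rigidity back into the equations of motion together with conservation of energy and of $\bL$'') is not the mechanism that works: conservation of $\H$ and $\bL$ by itself does not rule out rigid motions in which the body tumbles or the distance of the barycenter from $M$ varies; the constraint really comes from the pointwise-in-$\bx$ structure of the equations of motion.

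Concretely, the paper closes this gap as follows: (a) Lemma \ref{lemma.distance} and Corollary \ref{rigid} show that $\dot C=0$ preserves lengths of deformed material curves, so $\zeta_1\circ\zeta_0^{-1}$ is an isometry and $\zeta(\bx,t)=R(t)(\xi(\bx)+\bY(t))$; (b) on $\N$ the dissipative and conservative equations coincide (Remark \ref{nd}), and Lemma \ref{der.terze} substitutes the rigid ansatz into \eqref{lag.eq}, differentiates in $t$ and in $x^a$, and uses the invertibility of $D\xi$ to conclude that $\frac{d}{dt}\bigl[\partial^2 V_g(\xi(\bx)+\bY(t))\bigr]$ is independent of $\bx$; by analyticity of $V_g$ this forces all third derivatives of $V_g$ to vanish at $\bY(t)$; (c) Lemma \ref{Mfixed} then shows, by explicit computation with the Kepler potential, that this vanishing is only possible if $\bY(t)$ is constant, i.e.\ the planet is fixed in the comoving frame; (d) Lemma \ref{omega} and Corollary \ref{rel.eq} give $\dot\omega=0$ and identify the solution as an orbit of the $SO(3)$-action, i.e.\ a relative equilibrium with the given angular momentum. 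Without supplying steps (a)--(d) (or an alternative argument of comparable substance), your proposal reduces Theorem \ref{main} to this characterization but does not prove it.
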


\section{Proof of theorem \ref{main}}
\label{sec.3}

\subsection{LaSalle's invariance principle}
\label{sec:lasalle}

In order to study the dynamics of the system, we make use of the
LaSalle's principle which is a refinement of the classical Lyapunov's
theorem. We now recall its statement and proof.

Let   $\Y$ be a Banach space and let $\U\subset\Y$ be open. Consider a
system of differential equations
\begin{equation}\label{eq.diff}
\dot y=f(y)\qquad y \in \U\ ,
\end{equation}
We denote by $\varphi$ the flow of \eqref{eq.diff}, which we assume to
be locally well defined. 

\begin{defi}
Let $\gamma$ be the orbit of \eqref{eq.diff} with initial condition
$y_0$. A point $\eta$ is said to be an $\omega$-limit point of
$\gamma$ if there exists a sequence of times $t_n\rightarrow +\infty$
such that
\begin{equation}
\lim_{n\rightarrow +\infty}\varphi^{t_n}(y_0)=\eta\ .
\end{equation}
\end{defi}

\begin{defi}
The $\omega$-limit set of an orbit $\gamma$ is defined as the union of
all the $\omega$-limit points of $\gamma$.
\end{defi}

\begin{definition}
\label{precomp.2}
A solution $y(t)\subset\U$, $t\in(0,\infty)$ is said to be precompact
if, for any increasing sequence $\{t_n\}\subset(0,\infty) $ there
exists a subsequence $\{t_{n_k}\} $ s.t. the limit
$\lim_{k\to+\infty}y(t_{n_k})$ exists and belongs to $\U$.
\end{definition}

\begin{remark}
\label{simply}
It is well known that the $\omega$-limit of a precompact orbit is a
connected set (see e.g. \cite{abrahammarsden}). 
\end{remark}

The classical version of LaSalle's principle may be stated as follows:

\begin{teo}\label{th:lasalle}
Suppose that $\H:\U\to\Re$ is a real-valued smooth function, such that
$\lie_f\H(y)\leq0$, $\forall y\in\U$, where $\lie_f$ is the Lie
derivative. Let $\I$ be the largest {\it invariant} set contained in
$\N:=\left\{y \in \U|\lie_f\H(y)=0\right\}$, then the $\omega$-limit
of every precompact orbit is a non-empty subset of $\I$.
\end{teo}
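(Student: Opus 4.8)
The plan is to prove the classical LaSalle principle (Theorem~\ref{th:lasalle}) by combining the monotonicity of $\H$ along orbits with the standard topological properties of $\omega$-limit sets. First I would recall the two basic facts about the $\omega$-limit set $\omega(\gamma)$ of a precompact orbit $\gamma$ through $y_0$: it is nonempty (since precompactness forces some subsequence $\varphi^{t_n}(y_0)$ to converge, with limit in $\U$), and it is invariant under the flow $\varphi$. Nonemptiness is immediate from Definition~\ref{precomp.2}. For invariance, take $\eta\in\omega(\gamma)$ with $\varphi^{t_n}(y_0)\to\eta$ and any fixed $s$; by continuous dependence of $\varphi^s$ on initial data (local well-posedness of \eqref{eq.diff}), $\varphi^{s+t_n}(y_0)=\varphi^s(\varphi^{t_n}(y_0))\to\varphi^s(\eta)$, so $\varphi^s(\eta)\in\omega(\gamma)$; one must only check that $s+t_n\to+\infty$ and that $\varphi^s(\eta)$ is defined, which follows because $\eta\in\U$ and the relevant orbit segment stays in the precompact closure of $\gamma$.

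The next and central step is to show that $\H$ is constant on $\omega(\gamma)$. Since $\lie_f\H\le 0$ on $\U$, the function $t\mapsto\H(\varphi^t(y_0))$ is nonincreasing; it is also bounded below, because $\{\varphi^t(y_0)\}$ is precompact, hence $\overline{\gamma}$ is compact in $\U$ and $\H$, being continuous, is bounded on it. Therefore $c:=\lim_{t\to+\infty}\H(\varphi^t(y_0))$ exists. Now if $\eta\in\omega(\gamma)$ with $\varphi^{t_n}(y_0)\to\eta$, continuity of $\H$ gives $\H(\eta)=\lim_n\H(\varphi^{t_n}(y_0))=c$. Thus $\H\equiv c$ on $\omega(\gamma)$.

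Finally I would combine invariance with the constancy of $\H$ to land inside $\N$. Fix $\eta\in\omega(\gamma)$. By invariance, $\varphi^t(\eta)\in\omega(\gamma)$ for all $t$, so $\H(\varphi^t(\eta))=c$ for all $t$; differentiating at $t=0$ yields $\lie_f\H(\eta)=\frac{d}{dt}\big|_{t=0}\H(\varphi^t(\eta))=0$, i.e. $\eta\in\N$. Hence $\omega(\gamma)\subset\N$. Moreover $\omega(\gamma)$ is itself an invariant set contained in $\N$, so by definition of $\I$ as the \emph{largest} invariant subset of $\N$ we get $\omega(\gamma)\subseteq\I$. Together with nonemptiness, this is exactly the assertion of the theorem.

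I expect the only delicate point to be the bookkeeping around well-posedness: the flow $\varphi$ is only assumed locally well defined, so one must argue that on a precompact orbit the forward flow through any $\omega$-limit point is actually defined for all times and stays in $\U$ — this is where precompactness (and the requirement, built into Definition~\ref{precomp.2}, that limit points lie in $\U$) is essential. Everything else is the routine Lyapunov-style argument: monotonicity plus boundedness below gives a limiting value, continuity transports it to the $\omega$-limit set, and invariance upgrades ``constant'' to ``$\lie_f\H=0$''.
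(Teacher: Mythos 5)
Your proof is correct and follows essentially the same route as the paper's: establish invariance of the $\omega$-limit set, use monotonicity of $\H$ along the orbit plus continuity to show $\H$ is constant on it, and then use invariance to conclude $\lie_f\H=0$ there, hence containment in $\I$. The only differences are cosmetic (you justify boundedness below via compactness and spell out the ``largest invariant set'' step, which the paper leaves implicit), so no further changes are needed.
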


\begin{proof} Let $\gamma:=\left\{\varphi^t(y_0)|t>0\right\}$ be a precompact 
orbit, and let $\Gamma\subset \U$ be the $\omega$-limit of $\gamma$.
We prove now that $\Gamma$ is invariant. Indeed, let $\eta\in\Gamma$,
then there exists a sequence $t_n\to +\infty$ such that
$\varphi^{t_n}(y_0)\to\eta$. But we have
$$\varphi^t(\eta)=\varphi^t(\lim_{n\to +\infty}\varphi^{t_n}(y_0))=\lim_{n\to +\infty}\varphi^{t+t_n}(y_0)\in\Gamma\ .$$

We prove now that the $\omega$-limit is contained in $\N$. Let
$\eta_0\in\Gamma$. Then there exists a sequence $t_n\to +\infty$ such
that $\varphi^{t_n}(y_0)\to\eta_0$. Now,
let $$c:=\H(\eta_0)=\lim_{n\to +\infty}\H[\varphi^{t_n}(y_0)]\ .$$
Since $\H[\varphi^t(y_0)]$ is a time-nonincreasing function,
$\lim_{n\to +\infty}\H[\varphi^{t_n}(y_0)]=c$ is independent of the
subsequence $t_n$, and thus $\H=c$ on the whole $\Gamma$. By the
invariance of $\Gamma$ it follows that $\lie_f\H=0$ on $\Gamma$, and
therefore $\gamma\subset \I$.
\end{proof}

\subsection{Non-dissipating orbits}
\label{sec:char}

Consider the non dissipating manifold defined by
\begin{equation}
\label{non.dissi}
\N:=\left\{y\in\Q\oplus\Q\ :\ \dot \H(y)=0\right\}\ ,
\end{equation}
where, for short we denoted by $\dot \H(y)$ the Lie derivative of $\H$
along the vector field corresponding to the equations
\eqref{non.lag}. 
In this section we prove that the subset $\I\subset\N$ invariant under
the dynamics is formed by relative equilibria of the Hamiltonian
system \eqref{ham}.

\begin{remark}
\label{nd}
On $\N$ the Lagrange equations \eqref{lag.eq} coincide with the non
conservative equations \eqref{non.lag}.
\end{remark}

First we prove that the body is rigid along any orbit in $\I$ (we
think that this should be well known, but we were not able to find a
reference).
\begin{lemma}\label{lemma.distance}
Let $y\in C^2((0,+\infty),\Q\oplus\Q)$ be a solution of \eqref{non.lag}
s.t. $ y(t)\equiv(\zeta(t),\dot\zeta(t))\in\N$ $\forall
t\in(0,+\infty)$, then, for all $\bx,\by\in\B$, one has
\begin{equation}\label{dist.preserved}
\frac{d}{dt}|\zeta(\bx)-\zeta(\by)|=0\ .
\end{equation}
\end{lemma}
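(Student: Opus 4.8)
The plan is to show that along any orbit contained in $\N$ the satellite moves, at each instant, like a rigid body; once this is established \eqref{dist.preserved} is immediate. The only ingredient about $\N$ that is needed is Assumption \ref{hyp:cauchy.problem}(iv), which guarantees that along $y(t)\in\N$ the Cauchy--Green tensor is frozen: $\dot C(\bx,t)=0$ for every $\bx\in\B$, equivalently $F^TF$ is independent of $t$, where $F:=D\zeta$. Since the configuration is non-singular, $\det F>0$ and $F$ is pointwise invertible. Differentiating $F^TF$ in $t$ gives $\dot F^TF+F^T\dot F=0$, so the matrix $W:=\dot F\,F^{-1}$ is skew-symmetric at every $(\bx,t)$, and $\dot F=WF$, that is $\partial_a\dot\zeta^i=\sum_kW^i_k\,\partial_a\zeta^k$.

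The main point is to improve this to the statement that $W$ depends on $t$ alone; this is the classical infinitesimal-rigidity fact that a field with vanishing symmetrized gradient on a connected open set is an affine Killing field, and I would prove it in place. Differentiating $\dot F=WF$ with respect to $x^b$ and using $\partial_a\partial_b\zeta^k=\partial_b\partial_a\zeta^k$, the terms involving derivatives of $W$ times second derivatives of $\zeta$ cancel, leaving $\sum_k(\partial_bW^i_k)F^k_a=\sum_k(\partial_aW^i_k)F^k_b$. Contracting twice with $F^{-1}$ rewrites this as symmetry in the last two indices of $E^i_{pq}:=\sum_b(F^{-1})^b_q\,\partial_bW^i_p$; combined with the antisymmetry $E^i_{pq}=-E^p_{iq}$ inherited from $W^i_p=-W^p_i$, the usual three-index manipulation (symmetric in one pair, antisymmetric in an overlapping pair) forces $E\equiv 0$, hence $\partial_bW\equiv 0$ since $F^{-1}$ is invertible. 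Because $\B$, being the reference configuration of a single body, is connected, $W=W(t)$. Then $\partial_a\big(\dot\zeta^i-\sum_kW^i_k\zeta^k\big)=\partial_a\dot\zeta^i-\sum_kW^i_k\partial_a\zeta^k=0$, so there is $a(t)\in\Re^3$ with $\dot\zeta(\bx,t)=W(t)\zeta(\bx,t)+a(t)$. Equivalently, if one assumes configurations are embeddings, the spatial velocity $u:=\dot\zeta\circ\zeta_t^{-1}$ satisfies $\mathrm{sym}\,\nabla u=0$ on $\zeta(\B,t)$, whence $u(y,t)=a(t)+\omega(t)\times y$, which is perhaps more transparent; but the material-coordinate version above avoids assuming injectivity of $\zeta$. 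This rigidity lemma is the one genuinely nontrivial step; everything else is bookkeeping.

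With the rigid form of the velocity in hand the conclusion follows at once: for all $\bx,\by\in\B$,
\begin{equation*}
\dot\zeta(\bx,t)-\dot\zeta(\by,t)=W(t)\big(\zeta(\bx,t)-\zeta(\by,t)\big),
\end{equation*}
and since $W(t)$ is skew-symmetric the right-hand side is orthogonal to $\zeta(\bx,t)-\zeta(\by,t)$. Hence $\frac{d}{dt}|\zeta(\bx,t)-\zeta(\by,t)|^2=2\big(\zeta(\bx,t)-\zeta(\by,t)\big)\cdot\big(\dot\zeta(\bx,t)-\dot\zeta(\by,t)\big)=0$, so $|\zeta(\bx,t)-\zeta(\by,t)|^2$, and therefore $|\zeta(\bx,t)-\zeta(\by,t)|$, is constant in $t$, which is \eqref{dist.preserved}. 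Note that equation \eqref{non.lag} is used only to guarantee that $y$ is smooth and regular (so that $F$ is invertible and admits the derivatives taken above); the argument otherwise rests entirely on $\dot C=0$, and connectedness of $\B$ is essential, since across distinct components the statement would fail.
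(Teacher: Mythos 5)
Your proof is correct, but it follows a genuinely different route from the paper's. The paper argues metrically: since $\dot C=0$, the length of every deformed material path $\zeta(\gamma)$ is constant in time, so the transition map $\zeta_1\circ(\zeta_0)^{-1}$ between configurations at two times is a length-preserving diffeomorphism of $\zeta_0(\B)$ onto $\zeta_1(\B)$, and Euclidean distances are then preserved by invoking the fact that segments minimize distance --- a step the authors themselves flag as requiring ``some care'' because $\zeta_t(\B)$ need not be convex, and which moreover presupposes global injectivity of $\zeta_0$. You instead argue differentially: $\dot C=0$ plus $\det D\zeta>0$ makes $W=\dot F F^{-1}$ pointwise skew-symmetric, the commuting-second-derivatives/three-index argument (the classical infinitesimal rigidity lemma for Killing fields) forces $\partial_b W\equiv0$ on the connected set $\B$, hence $\dot\zeta(\bx,t)=W(t)\zeta(\bx,t)+a(t)$, and \eqref{dist.preserved} follows from skew-symmetry by differentiating $|\zeta(\bx,t)-\zeta(\by,t)|^2$. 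Your version buys a cleaner endgame: it bypasses the non-convexity issue and global invertibility of $\zeta$ entirely (only local non-singularity is used), and it delivers the rigid velocity field $W(t),a(t)$ explicitly, which is essentially the content of Corollary \ref{rigid} and feeds directly into the later lemmas; the price is a bit more spatial smoothness (second derivatives of $\zeta$ and $\dot\zeta$ and commutation of mixed partials) and explicit use of connectedness of $\B$, both of which the paper's own path argument also needs and which are covered by the standing regularity assumptions. One cosmetic slip: in the differentiation of $\dot F=WF$ the terms that cancel are those with $W$ (not derivatives of $W$) multiplying second derivatives of $\zeta$; the identity you then write, $\sum_k(\partial_bW^i_k)F^k_a=\sum_k(\partial_aW^i_k)F^k_b$, is the correct one.
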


\begin{proof}
Fix two arbitrary points $\bx,\by\in\B$ and consider a path $\gamma\subset\B$ connecting $\bx$ to $\by$. Let $s$ be the arclength parameter, so that $|\frac{d\gamma(s)}{ds}|=1$ and in the reference configuration the path $\gamma$ has length
\begin{equation*}
\length{\gamma}=\int_{\gamma}ds\ .
\end{equation*}
The length of the deformed path $\zeta(\gamma)$ is expressed in terms of the Cauchy Green tensor $C$ by (see e.g. \cite{slaughter})
\begin{equation}
\length{\zeta(\gamma)}=\int_{\gamma}\left[\left(\frac{d\gamma(s)}{ds}\right)^TC\left(\frac{d\gamma(s)}{ds}\right)\right]^{\frac{1}{2}}ds\ .
\end{equation}
Therefore, since $\dot C\equiv 0$, we have
$\frac{d}{dt}[\length{\zeta(\gamma)}]\equiv 0$.  Now, take two
arbitrary times $t_0,t_1$ and let $\zeta_0,\zeta_1$ be the
corresponding body configurations. We have that
$\zeta_1\circ(\zeta_0)^{-1}:\zeta_0(\B)\to\zeta_1(\B)$ is a
length-preserving map between the  $\zeta_0(\B)\subset\Re^3$ and
$\zeta_1(\B)\subset\Re^3$, both equipped with the restriction of the Euclidean
metric on $\Re^3$. Moreover,
$\zeta_1\circ(\zeta_0)^{-1}$ is a diffeomorphism. 

Using the fact that the segments minimize the distance it is easy to
conclude the proof of the lemma (some care is needed in order to take
care of the fact that $\zeta_1(\B)$ could fail to be convex).
\end{proof}

\begin{corollary}
\label{rigid}
Let $y=(\zeta,\dot\zeta)$ be as in the statement of Lemma
\ref{lemma.distance} then there exist $\xi\in\Q$, $R\in
C^2((0,+\infty),SO(3)) $ and $\bY\in C^2((0,+\infty),\Re^3)$ s.t.
\begin{equation}
\label{rigid.motion}
\zeta(\bx,t)=R(t)(\xi(\bx)+\bY(t))\ .
\end{equation}
\end{corollary}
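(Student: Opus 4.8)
The plan is to deduce the claimed decomposition from Lemma~\ref{lemma.distance}, which tells us that the distances $|\zeta(\bx,t)-\zeta(\by,t)|$ are time-independent for all $\bx,\by\in\B$. First I would fix a reference time, say $t=1$, and set $\xi_0:=\zeta(\cdot,1)$; then for every $t$ the map $\Psi_t:=\zeta(\cdot,t)\circ\xi_0^{-1}:\xi_0(\B)\to\zeta(\B,t)$ is a distance-preserving diffeomorphism between two open subsets of $\Re^3$, both carrying the Euclidean metric. The key classical fact is that a distance-preserving bijection between (sufficiently nice, in particular connected open) subsets of $\Re^3$ is the restriction of a global isometry of $\Re^3$, i.e.\ of the form $\bx\mapsto Q(t)\bx + \bu(t)$ with $Q(t)\in O(3)$ and $\bu(t)\in\Re^3$. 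Since $\xi_0(\B)$ is open and connected and $\Psi_t$ is a $C^2$ diffeomorphism, $Q(t)$ must have positive determinant (the Jacobian of $\Psi_t$ is $Q(t)$ and cannot change sign), so in fact $Q(t)\in SO(3)$. This already gives $\zeta(\bx,t)=Q(t)\xi_0(\bx)+\bu(t)$; writing $\bY(t):=Q(t)^{-1}\bu(t)$, $R(t):=Q(t)$ and $\xi:=\xi_0$ yields exactly \eqref{rigid.motion}.

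The remaining point is regularity: I must check that $R\in C^2((0,+\infty),SO(3))$ and $\bY\in C^2((0,+\infty),\Re^3)$. This follows because $R(t)$ and $\bu(t)$ can be recovered from the values of $\zeta$ at finitely many points of $\B$ (choose $\bx_0,\dots,\bx_3\in\B$ whose images under $\xi_0$ are affinely independent; then $R(t)$ and $\bu(t)$ depend smoothly — indeed via a fixed algebraic formula — on $\zeta(\bx_1,t)-\zeta(\bx_0,t),\dots,\zeta(\bx_3,t)-\zeta(\bx_0,t)$ and on the fixed vectors $\xi_0(\bx_i)-\xi_0(\bx_0)$). Since $y(t)=(\zeta(t),\dot\zeta(t))\in C^2((0,+\infty),\Q\oplus\Q)$ and point evaluation is continuous on $\Q$, the maps $t\mapsto\zeta(\bx_i,t)$ are $C^2$, hence so are $R$ and $\bu$, hence so is $\bY=R^{-1}\bu$.

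The main obstacle is the classical rigidity statement itself, and in particular the caveat already flagged at the end of the proof of Lemma~\ref{lemma.distance}: $\zeta(\B,t)$ need not be convex, so one cannot argue pointwise that $\Psi_t$ maps segments to segments directly. I would handle this by a local-to-global argument: on any convex subset (e.g.\ a small ball) inside $\xi_0(\B)$, $\Psi_t$ maps segments to segments, hence is affine there, so its differential is a constant element $Q$ of $O(3)$ on that ball; since $\xi_0(\B)$ is connected, these locally constant differentials agree, giving a single $Q(t)$ globally, and then $\Psi_t(\bx)-Q(t)\bx$ is locally constant, hence globally constant on the connected set $\xi_0(\B)$. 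This is the step requiring genuine care, but it is standard; the rest is bookkeeping.
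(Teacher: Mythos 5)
Your proposal is correct in substance, and it supplies an argument that the paper actually omits: Corollary \ref{rigid} is stated without proof, being regarded as the classical ``constant mutual distances $\Rightarrow$ rigid motion'' consequence of Lemma \ref{lemma.distance}. Your route (fix $\xi_0=\zeta(\cdot,1)$, observe that $\Psi_t=\zeta(\cdot,t)\circ\xi_0^{-1}$ preserves Euclidean distances, show it is affine by a local argument on balls plus connectedness of $\xi_0(\B)$, and recover $R(t)$, $\bu(t)$ from four material points with affinely independent images to get the $C^2$ dependence on $t$) is exactly the standard argument, and the regularity step via point evaluation is the right way to get $R\in C^2((0,+\infty),SO(3))$, $\bY\in C^2((0,+\infty),\Re^3)$. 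Two remarks. First, your justification that $Q(t)\in SO(3)$ is not valid as written: being a $C^2$ diffeomorphism of a connected open set does not force the (constant) Jacobian $Q(t)$ to have positive determinant --- a reflection composed with a translation is a perfectly good diffeomorphism. The fix is immediate with tools you already have: either invoke the paper's standing non-singularity condition $\det D\zeta(\cdot,t)>0$, which gives $\det D\Psi_t=\det D\zeta_t\,(\det D\xi_0)^{-1}>0$, or, after you have established continuity of $t\mapsto Q(t)$, use $Q(1)=I$ and $\det Q(t)=\pm1$ to conclude $\det Q(t)\equiv 1$. Second, and only as an observation: since Lemma \ref{lemma.distance} already gives preservation of the \emph{chordal} distances $|\zeta(\bx,t)-\zeta(\by,t)|$ for all pairs, you could bypass the local-to-global step entirely by the classical fact that a Euclidean-distance-preserving map defined on any set containing four affinely independent points is the restriction of a unique affine isometry of $\Re^3$; the convexity caveat mentioned in the paper really concerns the passage, inside the proof of the Lemma, from intrinsic length preservation to chordal distance preservation, not the step you are proving. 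Your local argument does the job too, but note it uses connectedness of $\B$, which the paper assumes only implicitly.
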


The reference frame with origin $\bY(t)$ and coordinate axes
$R(t)\be_i$ is usually called {\it comoving frame}. In this frame all
the points of the satellite are at rest along the orbit $y(t)$. In
particular $-\bY(t)$ is the position of the planet $M$ in the comoving
frame.

\begin{lemma}\label{der.terze}
Let $y\in C^2((0,+\infty),\Q\oplus\Q)$ be a solution of \eqref{non.lag}
s.t. $ y(t)\equiv(\zeta(t),\dot\zeta(t))\in\N$ $\forall
t\in(0,+\infty)$, then the quantity $\bY(t)$ in \eqref{rigid.motion}
evolves in such a way that
\begin{equation}\label{d.3}
\forall
i,j,k,\ \frac{\partial^3V_g}{\partial\chi^i\partial\chi^j\partial\chi^k}(\bY(t))=0
\end{equation}
is independent of time.
\end{lemma}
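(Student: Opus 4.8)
The plan is to feed the rigid decomposition \eqref{rigid.motion} of Corollary \ref{rigid} into the equations of motion and then read off \eqref{d.3} by differentiating twice with respect to the material point. By Remark \ref{nd}, on our orbit the conservative equation \eqref{lag.eq} holds; writing $\zeta(\bx,t)=R(t)(\xi(\bx)+\bY(t))$ and computing $\ddot\zeta$, its left-hand side becomes $\rho_0(\bx)$ times $\ddot R\,(\xi(\bx)+\bY)+2\dot R\,\dot\bY+R\,\ddot\bY$. The first and main step is to check that on the rigid motion every term on the right-hand side of \eqref{lag.eq} is $R(t)$ applied to a quantity that is either a $t$-independent function of $\bx$ or an explicit function of $\xi(\bx)+\bY(t)$. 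For the planetary term this is just the rotation covariance \eqref{rot.pot} of $\nabla V_g$ (Remark \ref{rot.inv1}), which turns it into $-R(t)\,\partial V_g/\partial\chi\,(\xi(\bx)+\bY(t))$. For the elastic term one differentiates the frame-indifference relation \eqref{2.7}, obtaining $\partial_\bF W(\bx,Q\bF)=Q\,\partial_\bF W(\bx,\bF)$ for $Q\in SO(3)$; since $D\zeta=R\,D\xi$, this term equals $R(t)$ times $\partial/\partial x^a\big(\partial_\bF W(\bx,D\xi(\bx))\big)$, which does not depend on $t$. For the self-gravitational term the same holds because it is an integral of terms depending only on the differences $\zeta(\bx)-\zeta(\by)=R(t)(\xi(\bx)-\xi(\by))$, whose lengths are frozen.

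Multiplying through by $R(t)^T$ on the left and dividing by $\rho_0(\bx)>0$, one then gets an identity of the form
\begin{equation*}
A(t)\,(\xi(\bx)+\bY(t))+2B(t)\dot\bY(t)+\ddot\bY(t)+\frac{\partial V_g}{\partial\chi}(\xi(\bx)+\bY(t))=h(\bx),
\end{equation*}
with $A(t):=R(t)^T\ddot R(t)$, $B(t):=R(t)^T\dot R(t)$ and $h$ independent of $t$. Since $\det D\xi>0$, the map $\xi$ is a local diffeomorphism and $\xi(\B)$ is open in $\Re^3$; hence we may use $u:=\xi(\bx)$ as independent variable, obtaining the same identity for all $u\in\xi(\B)$ and all $t>0$, with $h$ now read as a $t$-independent function of $u$. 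At fixed $t$ the term $A(t)(u+\bY(t))$ is affine in $u$ and the terms $2B(t)\dot\bY(t)+\ddot\bY(t)$ are $u$-constant; therefore $\partial^2/\partial u^j\partial u^k$ kills all of these and leaves $\partial^3V_g/\partial\chi^i\partial\chi^j\partial\chi^k$ at $u+\bY(t)$ equal to $\partial^2 h^i/\partial u^j\partial u^k(u)$, which does not depend on $t$. (Passing to $u$ is essential: in the variable $\bx$ the inertial term $A(t)(\xi(\bx)+\bY)$ is not affine, so differentiating in $\bx$ would not annihilate it.)

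Finally, \eqref{rigid.motion} is unchanged if $\xi$ is replaced by $\xi$ minus a constant vector and $\bY$ by $\bY$ plus the same vector, so we may normalize it so that $0\in\xi(\B)$; taking $u=0$ in the previous step then yields \eqref{d.3}. Here $V_g$ is indeed smooth at $\bY(t)$, since with this normalization $|\bY(t)|$ equals the distance from a material point to the planet, which is positive on any orbit for which \eqref{lag.eq} is meaningful. I expect the only genuinely delicate point to be the frame-indifference bookkeeping that freezes the elastic and self-gravitational force densities in the comoving frame; granted that, the rest is the elementary observation that the inertial terms of a rigid motion are affine in the material coordinate $\xi(\bx)$, so that two derivatives in that coordinate isolate $\partial^3V_g$. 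This is precisely the information that will then be combined with the explicit form $V_g(\chi)=-kM/|\chi|$ to force $\bY(t)$ to be constant.
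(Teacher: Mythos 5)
Your proof is correct, and its first half is the same computation as the paper's: you insert the rigid decomposition \eqref{rigid.motion} into \eqref{lag.eq} (legitimate on $\N$ by Remark \ref{nd}), use the equivariance of $\nabla V_g$, the frame-indifference \eqref{2.7} of the stress, and the Euclidean invariance of self-gravity to arrive at the comoving-frame identity — this is exactly the paper's equation \eqref{com.eq}, with your $A=R^T\ddot R=\dot{\hat\omega}+\hat\omega\hat\omega$ and your $h$ collecting the frozen elastic and self-gravitational force densities. The two arguments diverge only in how $\partial^3V_g$ is isolated and evaluated at $\bY(t)$: the paper first takes $d/dt$ (killing $h$), then one $x$-derivative and the invertibility of $D\xi$ to get \eqref{d.7}, and then invokes the analyticity of $V_g$ to promote $\bx$-independence to $\chi$-independence on the whole domain, so that it can evaluate at $\chi=0$; you instead take two derivatives in the body variable $u=\xi(\bx)$ at fixed $t$ (noting, correctly, that the inertial terms are affine in $u$, and that smoothness of $h$ in $u$ is automatic from the identity itself), and you replace the analytic-continuation step by fixing the translation freedom in \eqref{rigid.motion} so that $0\in\xi(\B)$. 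The only caveat is that with this normalization you prove \eqref{d.3} for that particular choice of $\bY$ rather than for an arbitrary decomposition; since any two decompositions differ by a time-independent rotation and translation, the constancy of $\bY$ concluded in Lemma \ref{Mfixed} is unaffected, so your variant is fully adequate for the sequel — and it has the small advantages of avoiding analytic continuation of $V_g$ and of guaranteeing $\bY(t)\neq0$ so that the evaluation point is automatically in the domain of $V_g$.
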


\begin{proof} Inserting the expression \eqref{rigid.motion} in the
  Lagrange equations \eqref{lag.eq} and exploiting the rotational
  invariance of the r.h.s. (cf. Remark \ref{rot.inv1}) one gets the
  following equation for $\xi$, $R$ and $\bY$:
\begin{align}
\label{com.eq}
&\left[\ddot \bY+(\dot{\hat
    \omega}+\hat\omega\hat\omega)\bY+2\hat\omega\dot\bY + (\dot{\hat
    \omega}+\hat\omega\hat\omega)\xi\right] \\ \nonumber &=
-\frac{\partial V_g}{\partial \chi}(\xi+\bY)-\frac{\partial
  V_{sg}^{\zeta}}{\partial \chi}(\xi)+
\frac{1}{\rho_0}\frac{\partial}{\partial x^a}\frac{\partial
  W}{\partial(\partial \zeta/\partial x^a)} (D\xi) \ ,
\end{align}
where as usual $\hat \omega:=R^T\dot R$. Denote for short
\begin{equation}
\label{def.L}
L:=(\dot{\hat \omega}+\hat\omega\hat\omega) 
\end{equation}
and take the time
derivative of \eqref{com.eq}. Taking into account that $\xi$ does not
depend on time one gets
\begin{equation}
\label{com.eq.2}
\left\{\frac{d}{dt} \left[\ddot \bY+L\bY+2\hat\omega\dot\bY \right]
+ \dot L \xi\right\} =\frac{d}{dt} \left[ -\frac{\partial
  V_g}{\partial \chi}(\xi+\bY) \right] \ .
\end{equation}
Take now the derivative of such a quantity with respect to $x^a$, one
gets the componentwise equation
\begin{equation}
\label{com.eq.3}
\left(\sum_k \dot L^i_k \frac{\partial \xi^k}{\partial x^a}\right) =\sum_k
\frac{d}{dt} \left[ -\frac{\partial^2 V_g}{\partial
    \chi^i\partial\chi^k}(\xi+\bY) \frac{\partial \xi^k}{\partial
    x^a}\right] \ ,
\end{equation}
or using the invertibility of the matrix $\frac{\partial \xi^k}{\partial
    x^a}$
\begin{equation}\label{d.7}
\dot L^i_{k}=-\frac{d}{dt}\left[\frac{\partial^2V_g}{\partial\chi^i\partial\chi^k}(\xi(\bx)+\bY(t))\right]\ .
\end{equation}
This equation shows in particular that the r.h.s. is independent of
$\bx$. Due to the invertibility of $\xi$ and to the analyticity of
$V_g$ this means that the function of $\chi$
\begin{equation}
\label{analitic}
-\frac{d}{dt}\left[\frac{\partial^2V_g}{\partial\chi^i\partial\chi^k}(\chi+\bY(t))\right]\
\end{equation}
is actually independent of $\chi$. Thus taking the derivative with
respect to $\chi^j$ and evaluating at $\chi=0$ one gets the thesis.
\end{proof}

\begin{lemma}\label{Mfixed}
Let $y\in C^2((0,+\infty),\Q\oplus\Q)$ be a solution of \eqref{non.lag}
s.t. $ y(t)\equiv(\zeta(t),\dot\zeta(t))\in\N$ $\forall
t\in(0,+\infty)$, then the quantity $\bY(t)$ in \eqref{rigid.motion}
is actually independent of time.
\end{lemma}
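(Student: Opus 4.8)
\textbf{Proof proposal for Lemma \ref{Mfixed}.}

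The plan is to exploit the output of Lemma \ref{der.terze}, namely that the third derivatives of $V_g$ vanish identically at $\bY(t)$, together with the explicit form $V_g(\chi)=-kM/|\chi|$. First I would compute the Hessian of $V_g$ explicitly: one has $\partial^2 V_g/\partial\chi^i\partial\chi^j = kM\,(|\chi|^2\delta^{ij}-3\chi^i\chi^j)/|\chi|^5$, and differentiating once more produces a rational expression in $\chi$ whose numerator, set equal to zero, forces a rigid algebraic constraint on the point where it is evaluated. Concretely, the condition that \emph{all} third partials of $-kM/|\chi|$ vanish at a point $\chi=\bY$ has no solution with $\bY\neq 0$ (and $\bY=0$ is excluded because the solution is regular, hence non-impacting, so $|\bY(t)|$ is bounded away from $0$ along the orbit). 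Wait — more care is needed: the third derivative tensor of $1/|\chi|$ is homogeneous of degree $-4$, so it is never identically zero for finite nonzero $\chi$; the only way \eqref{d.3} can hold is vacuously, which it cannot since $M\neq 0$. This already shows \eqref{d.3} cannot hold as literally stated for a single fixed $\bY$, so the correct reading of Lemma \ref{der.terze} must be that the map $\chi\mapsto -\frac{d}{dt}[\partial^2 V_g/\partial\chi^i\partial\chi^k(\chi+\bY(t))]$ is independent of $\chi$; I would use \emph{that} statement.

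So the key step is: the function in \eqref{analitic} is independent of $\chi$ for each fixed $t$. Since $\partial^2 V_g/\partial\chi^i\partial\chi^k$ is a nonconstant real-analytic function on $\Re^3\setminus\{0\}$ that vanishes at infinity, the only way a translate of its time-derivative can be constant in $\chi$ is if that constant is zero and the time-derivative itself vanishes. Concretely, write $H(\chi):=\partial^2 V_g/\partial\chi^i\partial\chi^k(\chi)$; the hypothesis says $H(\chi+\bY(t_1))-H(\chi+\bY(t_0))$ is independent of $\chi$ for any $t_0,t_1$, i.e. $H(\chi+\bd)-H(\chi)\equiv \text{const}$ where $\bd:=\bY(t_1)-\bY(t_0)$. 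Taking $|\chi|\to\infty$ shows the constant is $0$, so $H(\chi+\bd)=H(\chi)$ for all $\chi$; but $H$ is not periodic (it is homogeneous of degree $-3$ and not identically constant), hence $\bd=0$. Therefore $\bY(t_1)=\bY(t_0)$ for all $t_0,t_1$, which is the claim.

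The main obstacle is making the "non-periodicity forces $\bd=0$" step airtight at the level of the full Hessian tensor rather than a single component: I would argue that if $H(\chi+\bd)=H(\chi)$ for all $\chi$ then iterating gives $H(\chi+n\bd)=H(\chi)$ for all integers $n$, contradicting $H(\chi)\to 0$ as $|\chi|\to\infty$ unless $H\equiv 0$ — and $H\equiv 0$ is false since $M\neq 0$. This disposes of the argument cleanly without needing delicate analyticity estimates. A minor point to address is that $\bY(t)$ is a priori only $C^2$ in $t$ (from Corollary \ref{rigid}), but since we only use values $\bY(t_0),\bY(t_1)$ at two arbitrary times, no extra regularity is needed; continuity of $\bY$ together with the just-proved statement that it takes a single value on a dense (indeed full) set of pairs of times gives $\bY\equiv\text{const}$ outright.
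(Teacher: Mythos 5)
Your proof is correct, but it follows a genuinely different route from the paper's. The paper reads Lemma \ref{der.terze} as saying that the third derivatives $\frac{\partial^3 V_g}{\partial\chi^i\partial\chi^j\partial\chi^k}(\bY(t))$ are \emph{constant in time} (the ``$=0$'' in \eqref{d.3} is indeed a misprint, as you noticed), then computes these third derivatives explicitly (formulas \eqref{diff3.1}--\eqref{diff3.3}), chooses the comoving frame so that $Y^2=Y^3=0$, $Y^1=r$ at $t=0$, and uses a continuity argument to force $Y^2\equiv Y^3\equiv 0$ and finally $1/(Y^1)^4$ constant, hence $\bY$ constant. You instead go back to the intermediate statement \eqref{analitic} established inside the proof of Lemma \ref{der.terze} (the $\chi$-independence of $\frac{d}{dt}\bigl[\partial^2 V_g(\chi+\bY(t))\bigr]$), integrate in $t$ to get $H(\chi+\bd)-H(\chi)\equiv\mathrm{const}$ for the Hessian $H$ and $\bd=\bY(t_1)-\bY(t_0)$, and conclude by decay at infinity plus the non-periodicity (iteration) argument that $\bd=0$. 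This is softer and arguably more robust: it uses only that the Hessian of $V_g$ decays at infinity and is not identically zero, so it would work for a general attracting potential and it never needs to evaluate derivatives of $V_g$ at $\bY(t)$ itself (so no implicit assumption $\bY(t)\neq 0$, which the paper's frame choice quietly makes); the price is that you rely on a statement from the \emph{proof} of Lemma \ref{der.terze} rather than on its (corrected) statement, whereas the paper's computation is shorter once the explicit potential is accepted. Two small points: your parenthetical that ``$\bY=0$ is excluded because the solution is non-impacting'' is not quite right (non-impact means $-\bY(t)\notin\overline{\xi(\B)}$, not $\bY(t)\neq 0$), but this is only used in your side remark about the misprint, not in the main argument; and when integrating in $t$ and when iterating $H(\chi+n\bd)=H(\chi)$ you should restrict $\chi$ to avoid the compact curve $\{-\bY(s)\}$, respectively the countably many translates of the singularity — harmless, since you only need large $|\chi|$ and continuity/analyticity of $H$ to finish.
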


\begin{proof}
We write down explicitly \eqref{d.3}. We denote
$\left(Y^1(t),Y^2(t),Y^3(t)\right)=\bY(t)$ and
$(Y^1)^2+(Y^2)^2+(Y^3)^2=r^2$. One has
\begin{equation}\label{diff3.1}
\frac{\partial^3V_g}{\partial (\chi^1)^3}(\bY)=\frac{3kM Y^1(5(Y^1)^2-3r^2)}{r^7}
\end{equation}
\begin{equation}\label{diff3.2}
\frac{\partial^3V_g}{\partial (\chi^1)^2\partial \chi^2}(\bY)=\frac{3kMY^2(5(Y^1)^2-r^2)}{r^7}
\end{equation}
\begin{equation}\label{diff3.3}
\frac{\partial^3V_g}{\partial (\chi^1)^2\partial \chi^3}(\bY)=\frac{3kMY^3(5(Y^1)^2-r^2)}{r^7}\ .
\end{equation}
Choose now the comoving frame in such a way that $Y^2=Y^3=0$ and $Y^1=r$ at
$t=0$ (which is possible up to redefinition of $R$ and $\xi$). By \eqref{diff3.2} and \eqref{diff3.3}, for any $t$ one must
have $$Y^2(5(Y^1)^2-r^2)=0$$ $$Y^3(5((Y^1)^2-r^2)=0$$ which by continuity implies
$Y^2(t)\equiv Y^3(t)\equiv 0$ and $Y^1(t)\equiv r(t)$. Finally, substituting
in \eqref{diff3.1}, one has $$\frac{1}{(Y^1(t))^4}=\frac{1}{(Y^1_0)^4}\ ,$$
whose only solution is $Y^1(t)\equiv Y^1_0$.
\end{proof}

Remark now that, given a non dissipating solution, one can associate to
it a shape of the body described by the function $\xi(\bx)+\bY$, and
the shape evolves by a rigid motion about the fixed point $M$. 
Introduce the angular velocity which is defined as usual as the vector
$\omega$ s.t. the two operators
$$
\omega\times\cdot=\hat \omega\ 
$$
coincide. Then the velocity of the motion, in the comoving frame, is
given by $\omega\times (\xi+\bY)$.

We have now that, for a non dissipating solution, $\omega$ does not depend on
time.
\begin{lemma}
\label{omega}
The angular velocity $\omega$ of a nondissipating solution $y$ is
independent of time.
\end{lemma}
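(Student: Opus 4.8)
The plan is to squeeze all the information out of the rigid‑motion equation \eqref{com.eq}, which is valid along any orbit lying in $\N$, combined with the two facts already established: that the reference shape $\xi$ does not depend on time and that, by Lemma \ref{Mfixed}, the vector $\bY(t)$ is also constant. First I would go back to equation \eqref{d.7} obtained in the proof of Lemma \ref{der.terze}, namely $\dot L^i_k=-\frac{d}{dt}\big[\frac{\partial^2 V_g}{\partial\chi^i\partial\chi^k}(\xi(\bx)+\bY(t))\big]$. Since now both $\xi(\bx)$ and $\bY(t)$ are time‑independent, the argument $\xi(\bx)+\bY(t)$ is frozen, so the right‑hand side vanishes and $\dot L\equiv 0$; in other words the matrix $L=\dot{\hat\omega}+\hat\omega\hat\omega$ is constant along the orbit.

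The second step is a linear‑algebra observation about this constant matrix. Because $R(t)\in SO(3)$, the matrix $\hat\omega=R^T\dot R$ is antisymmetric, hence $\dot{\hat\omega}$ is antisymmetric too, while $\hat\omega\hat\omega$ is symmetric. Therefore $L=\dot{\hat\omega}+\hat\omega^2$ is precisely the decomposition of the constant matrix $L$ into its antisymmetric and symmetric parts, and by the uniqueness of this decomposition both $A:=\dot{\hat\omega}$ and $S:=\hat\omega^2$ are separately constant in time.

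From $\dot{\hat\omega}=A$ with $A$ constant antisymmetric one gets $\hat\omega(t)=\hat\omega(0)+tA$, and imposing that $\hat\omega(t)^2$ be independent of $t$ forces both $A\hat\omega(0)+\hat\omega(0)A=0$ and, crucially, $A^2=0$. For a $3\times 3$ antisymmetric matrix $A=\hat a$ one has the identity $A^2=aa^T-|a|^2 I$, so $A^2=0$ implies $aa^T=|a|^2 I$, which is impossible unless $a=0$ (the left side has rank at most one). Hence $A=0$, i.e. $\dot{\hat\omega}\equiv 0$, so $\hat\omega$ — and therefore the angular velocity $\omega$ defined by $\omega\times\cdot=\hat\omega$ — is constant along the nondissipating orbit, which is the claim.

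There is no serious obstacle here: once Lemma \ref{Mfixed} is in hand, reducing to ``$L$ constant'' is immediate, and the only point requiring a moment of attention is to recognize $\dot{\hat\omega}+\hat\omega^2$ as the antisymmetric/symmetric splitting of $L$ and then to invoke the elementary fact that a nonzero $3\times3$ antisymmetric matrix cannot square to zero.
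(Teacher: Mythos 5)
Your proposal is correct, and its first half is exactly the paper's: you return to \eqref{d.7}, note that once $\bY$ is constant (Lemma \ref{Mfixed}) and $\xi$ is time-independent the right-hand side vanishes, and conclude $\dot L=0$ for $L=\dot{\hat\omega}+\hat\omega\hat\omega$. Where you diverge is in extracting $\dot\omega=0$ from the constancy of $L$. The paper applies $\frac{d}{dt}\left[\dot\omega\times\chi+\omega\times(\omega\times\chi)\right]=0$ to $\chi=\be_i$, projects on $\be_i$ so that the $\dot\omega\times\be_i$ term drops out, obtains $\frac{d}{dt}\left|\omega\times\be_i\right|^2=0$ for every $i$, and leaves the remaining step ($|\omega|^2-\omega_i^2$ constant for all $i$ forces each $\omega_i$ constant by continuity) as a ``trivial computation''. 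You instead split the constant matrix $L$ into its antisymmetric part $\dot{\hat\omega}$ and symmetric part $\hat\omega^2$, so each is separately constant; then $\hat\omega(t)$ is affine in $t$, constancy of $\hat\omega(t)^2$ kills the coefficients of $t$ and $t^2$, and the identity $\hat a^2=aa^T-|a|^2 I$ rules out a nonzero constant antisymmetric $A$ with $A^2=0$ by a rank count. Both endgames are elementary and correct; yours has the merit of making the paper's implicit final computation fully explicit and of exhibiting the structure ($\hat\omega$ affine in time, hence constant), at the cost of a slightly longer argument. No gaps.
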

\begin{proof}
Consider again equation \eqref{d.7}. Since we now know that $\bY$ is
independent of time it follows that the operator
$L$ (cf. eq. \eqref{def.L}) fulfills $\dot L=0$. This means that, for
any vector $\chi$ one has
\begin{equation}
\label{ang.v}
\frac{d}{dt}\left[\dot \omega\times \chi+\omega\times(\omega\times
  \chi)\right] = 0\ .
\end{equation}
To exploit such an equation take $\chi=\be_i$ and project the square
bracket on $\be_i$. Using standard vector identities this implies
$$
\frac{d}{dt} \left|\omega\times \be_i\right|^2=0\ \quad \forall i\ .
$$
A trivial computation shows that this implies $\dot \omega=0$. 
\end{proof}

\begin{corollary}
\label{rel.eq}
Let $y(t)$ be a nondissipating solution as above, then it is the orbit
of a relative equilibrium of the system \eqref{ham}.
\end{corollary}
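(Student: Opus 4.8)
\textbf{Proof proposal for Corollary \ref{rel.eq}.}

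The plan is to assemble what the previous lemmas already give us and then invoke the characterization of relative equilibria in Remark \ref{r.11}. From Corollary \ref{rigid} together with Lemma \ref{Mfixed} we know that along a nondissipating solution the configuration has the form $\zeta(\bx,t)=R(t)(\xi(\bx)+\bY)$ with $\bY$ constant in time and $R\in C^2((0,+\infty),SO(3))$, and from Lemma \ref{omega} the angular velocity $\omega$ is constant, so $\hat\omega=R^T\dot R$ is a fixed skew-symmetric matrix and $R(t)=R(0)e^{t\hat\omega}$. In particular the curve $t\mapsto\zeta(\cdot,t)$ lies entirely on a single group orbit of the $SO(3)$-action \eqref{group}, namely the orbit through $\zeta(\cdot,0)$ (equivalently, in phase space, through $z(0)=(\pi(0),\zeta(0))$, since $\pi=\rho_0\dot\zeta$ and $\dot\zeta=\hat\omega\zeta$ so the momentum is also obtained from the configuration by a fixed linear operation and is carried along the same orbit). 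Thus the Hamiltonian flow line through $z(0)$ coincides with the one-parameter subgroup orbit $t\mapsto e^{t\hat\omega}\cdot z(0)$.

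First I would make precise the passage to the Hamiltonian picture: on $\N$, by Remark \ref{nd}, the nonconservative equations \eqref{non.lag} coincide with the conservative Lagrange equations \eqref{lag.eq}, which by the discussion following \eqref{k} are equivalent to Hamilton's equations for $\H$. Hence $y(t)$, transported to phase space via the Legendre transform \eqref{momenta}, is a genuine integral curve of the Hamiltonian vector field $X_\H$. Next I would observe that differentiating $\zeta(\bx,t)=R(t)(\xi(\bx)+\bY)$ and $\pi=\rho_0\dot\zeta$ at $t=0$ shows that $X_\H(z(0))=\frac{d}{dt}\big|_{0}\,e^{t\hat\omega}\cdot z(0)$, i.e. $X_\H(z(0))$ equals the value at $z(0)$ of the infinitesimal generator of the $SO(3)$-action associated with the Lie-algebra element $\hat\omega$. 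Therefore $X_\H(z(0))$ is tangent to the group orbit through $z(0)$, which by Remark \ref{r.11} is exactly the condition for $z(0)$ to be a relative equilibrium. The same computation at any other time $t_0>0$ shows $z(t_0)$ is a relative equilibrium as well, consistently with Remark \ref{r.12}; so the whole orbit consists of relative equilibria of \eqref{ham}, which is the claim.

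There is no real obstacle here: all the analytic work — rigidity, the planet sitting at a fixed point, constancy of $\omega$ — has been done in Lemmas \ref{lemma.distance}–\ref{omega}, and what remains is the essentially formal identification of ``rigid rotation at constant angular velocity about a fixed point'' with ``group-orbit motion'', hence with a relative equilibrium. The only point requiring a little care is to make sure the Legendre transform and the $SO(3)$-action \eqref{sym.ham} are compatible with the Lagrangian action \eqref{group}, so that tangency of $X_\H$ to the Lagrangian-side orbit translates to tangency to the phase-space orbit; this is immediate because the action on $\pi=\rho_0\dot\zeta$ is just $R\pi$, the same rotation applied componentwise, matching \eqref{sym.ham}. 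One should also note, for completeness, that the hypothesis of the earlier lemmas — that $y(t)\in\N$ for all $t\in(0,+\infty)$ — is exactly the standing assumption here, so they apply verbatim.
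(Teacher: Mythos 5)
Your proposal is correct and follows essentially the same route as the paper: both show that the nondissipating solution, written as $\zeta(t)=R(t)\zeta_0$ with $\omega$ constant, is carried (together with $\pi=\rho_0\dot\zeta$) along a single orbit of the $SO(3)$-action, and then invoke the characterization of relative equilibria as group-orbit motions (Remark \ref{r.11}). The only cosmetic point is that the identity $\dot\zeta=\hat\omega\zeta$ in your parenthetical requires the normalization $R(0)=I$ (otherwise the spatial angular velocity $R\hat\omega R^T$ appears), which is exactly the normalization the paper adopts.
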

\begin{proof}
We have proved that along a non dissipating solution
$\zeta(t)=R(t)\zeta_0$ with a suitable configuration $\zeta_0$ and a
rotation matrix $R(t)$ that we can choose in such a way that
$R(0)=I$. It follows $\dot \zeta(t)=R(t)[\omega\times
  \xi]=R\dot\zeta(0)$. Passing to the phase space one gets that along
such an orbit $\pi(t)=\rho_0\dot\zeta(t)=R(t)\pi(0)$. This shows that
the solution is actually an orbit of the symmetry group, and this is a
characterization of being a relative equilibrium.
\end{proof}

Thus we have that the manifold $\I$ is the union of the trajectories
of all the possible relative equilibria of the system.

\subsection{End of the proof}\label{end}

 Applying La Salle principle to our system, with $\U$ defined to be
 the set of regular configurations, we get the following Lemma.

\begin{lemma}\label{three.outcomes.0}
For any solution to \eqref{non.lag} with the boundary condition
\eqref{stress.free}, one of the following three (future) scenarios
occur:
\begin{itemize}
\item[(i)] the trajectory is unbounded;
\item[(ii)] the solution impacts the planet;
\item[(iii)] the solution is asymptotic to the non-dissipating
  invariant manifold $\I$. 
\end{itemize}
\end{lemma}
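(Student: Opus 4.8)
The plan is to obtain the Lemma as a direct application of LaSalle's principle (Theorem \ref{th:lasalle}), taking $\Y=\Q\oplus\Q$, taking $\U$ to be the (open) set of regular configurations, taking $f$ to be the vector field defined by the nonconservative equation \eqref{non.lag}, and taking the Lyapunov function $\H$ to be the energy \eqref{ham}. By Assumption \ref{hyp:cauchy.problem}(i) the flow $\varphi$ of $f$ is locally well defined on $\U$, and by Assumption \ref{hyp:cauchy.problem}(iv) we have $\lie_f\H\le0$ on $\U$, with $\lie_f\H(y)=0$ if and only if $\dot C=0$; hence the set $\N=\{y\in\U\ :\ \lie_f\H(y)=0\}$ entering Theorem \ref{th:lasalle} coincides with the non-dissipating manifold \eqref{non.dissi}.

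The first step is to establish the trichotomy at the level of precompactness. Assume that neither (i) nor (ii) occurs, i.e. $\sup_{\bx\in\B,\,t>0}|\zeta(\bx,t)|<\infty$ and the solution is not impacting in the sense of Definition \ref{ni}. By Assumption \ref{hyp:cauchy.problem}(ii) the solution then exists for all $t\in(0,\infty)$ and its configuration stays non-singular, so $y(t)$ is a regular solution in the sense of Definition \ref{reg.sol}; by Assumption \ref{hyp:cauchy.problem}(iii) it is moreover precompact. Since $\U$ is precisely the set of regular configurations, precompactness in the sense of Definition \ref{precomp} coincides here with precompactness in the sense of Definition \ref{precomp.2}, so the hypotheses of Theorem \ref{th:lasalle} are met by the orbit of $y$.

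The second step is to read off the conclusion. Theorem \ref{th:lasalle} gives that the $\omega$-limit set $\Gamma$ of the orbit is a nonempty subset of the largest invariant set $\I\subset\N$; a standard argument (if $\dist(\varphi^{t_n}(y_0),\Gamma)\ge\varepsilon$ along some $t_n\to+\infty$, extract from $\{\varphi^{t_n}(y_0)\}$ a convergent subsequence, whose limit must lie in $\Gamma$, a contradiction) upgrades this to $\dist(y(t),\Gamma)\to0$, hence $\dist(y(t),\I)\to0$, which is the meaning of ``asymptotic to $\I$'' in scenario (iii). Finally, the chain of results of Section \ref{sec:char} — Lemma \ref{lemma.distance} and Corollary \ref{rigid} (rigidity on $\I$), Lemmas \ref{der.terze} and \ref{Mfixed} ($\bY$ constant), Lemma \ref{omega} ($\omega$ constant) and Corollary \ref{rel.eq} — identifies $\I$ with the union of the trajectories of the relative equilibria of \eqref{ham}, so $\I$ is indeed a ``non-dissipating invariant manifold'' as claimed.

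I do not expect a substantial obstacle here: the two nontrivial ingredients have already been isolated, namely Assumption \ref{hyp:cauchy.problem} (which supplies global existence, persistence of non-singularity, precompactness and the dissipation identity) and the kinematic analysis of Section \ref{sec:char} (which characterizes $\I$). The only points needing care are bookkeeping: checking that the various notions of regularity and precompactness (Definitions \ref{reg.sol}, \ref{precomp}, \ref{precomp.2}) match up so as to feed correctly into Theorem \ref{th:lasalle}, that $\H$ is finite and smooth on the set $\U$ of regular configurations so that $\lie_f\H$ makes sense there, and making explicit in which sense a precompact orbit converges to its $\omega$-limit set.
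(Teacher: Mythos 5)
Your proposal is correct and follows essentially the same route as the paper, which obtains the Lemma precisely by applying LaSalle's principle (Theorem \ref{th:lasalle}) with $\U$ the set of regular configurations, relying on Assumption \ref{hyp:cauchy.problem} for local well-posedness, global existence of non-impacting solutions, precompactness of bounded regular solutions, and the sign of $\lie_f\H$. Your extra bookkeeping (matching Definitions \ref{precomp} and \ref{precomp.2}, and the standard argument that a precompact orbit converges to its $\omega$-limit set) only makes explicit what the paper leaves implicit.
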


\noindent{\it End of the proof of Theorem \ref{main}.} Let $\bL_0$ be
the initial value of the angular momentum, then a regular bounded
solution is asymptotic to 
\begin{equation}
\label{lim}
\I\cap \left\{(\zeta,\dot\zeta)\in T\Q\ :\ \bL(\zeta,\dot\zeta
)=\bL_0\right\}\ ,
\end{equation}
but, by the nondegeneracy assumption the set \eqref{lim} is formed by
orbits which are isolated in the invariant manifold $
\left\{(\zeta,\dot\zeta)\in T\Q\ :\ \bL(\zeta,\dot\zeta
)=\bL_0\right\}$. Thus, by Remark \ref{simply} the $\omega$-limit set
of an orbit is a single orbit in the set \eqref{lim}, i.e. a
synchronous orbit. \qed 

\section{On the nondegeneracy assumption}\label{assu.deg}

In this section we are going to prove that if the restoring elastic
forces described by the potential \eqref{ue} are strong enough, then
the nondegeneracy of a relative equilibrium for the system \eqref{lag}
is equivalent to the nondegeneracy of the relative equilibrium for a
rigid body having the shape given by the asymptotic configuration of
the satellite. 

For simplicity, in this section we limit the discussion at the formal
level, namely we forget all the difficulties related to the existence
of unbounded operators. All what follows is rigorous if $\Q$ is finite
dimensional. It can also be made rigorous in the case of PDEs by
detailing most of the assumptions, following the ideas of \cite{AP}
and exploiting the ellipticity properties of the elasticity tensor
(see \cite{marsdenhughes}), however this is outside the aims of the
present paper.

First of all, having fixed a configuration $\bar \zeta$ we introduce
the dynamical system describing the evolution of a rigid satellite
with shape $\bar\zeta$. The configuration space $SO(3)\times
\Re^3\ni(R,\chi)$ and the dynamics is obtained from the Lagrangian
obtained by restricting the
Lagrangian \eqref{lag} to the set of motions of the form
\begin{equation}
\label{rigid.1}
\zeta(t)=R(t)[\bar \zeta+\chi(t)]\ .
\end{equation}
Denote by $\L_{\bar\zeta}$ such a Lagrangian.  One also has a
corresponding Hamiltonian system, which is deduced from $\eqref{lag}$
in the standard way and whose Hamiltonian coincides with the
restriction of the Hamiltonian \eqref{ham} to the phase space of the
rigid body. Denote by $H_{\bar\zeta}$ the Hamiltonian of the rigid
body. Such a Hamiltonian is invariant under rotations so one can pass
to the reduced system and to introduce again the relative equilibria
and define the nondegenerate relative equilibria for such a system and
a nondegenerate value of the modulus of the angular momentum. 

In order to make a connection between the nondegeneracy of the
relative equilibria for rigid motions and the nondegeneracy of the
elastic motions we need to specify an assumption on the elastic
potential energy $U_e$. Essentially we are going to assume that the
elastic potential has a very steep, isolated (up to the symmetries)
minimum at some shape.

First remark that $U_e$ is invariant under the action 
\begin{equation}
\label{group.2}
(SO(3)\times\Re^3)\times \Q\ni ((R,\chi),\zeta)\mapsto
R[\zeta+\chi]\in\Q\ ,
\end{equation}
 and, given a point $\zeta$, consider the group orbit
$\G_\zeta:=(SO(3)\times\Re^3)\zeta\subset \Q$, so, if $\bar \zeta$ is a
 critical point of $U_e$, then all the orbit $\G_{\bar\zeta}$ is
 critical for $U_e$.

\begin{hyp}\label{suUe}
One has $U_e=\frac{1}{\epsilon}\tilde U_e$, and $\tilde U_e$ is a
smooth function invariant under the group action \eqref{group.2} with
the further property that the set of its critical points is formed by
finitely many orbits $\G_{\zeta^{(i)}}$ and each critical point is
nondegenerate in the direction transversal to the group
orbit\footnote{Transversal nondegeneracy means that the restriction of
  $\tilde U_e$ to any hyperplane transversal to the group orbit has a
  differential which is an isomorphism.}.
\end{hyp}

Under this assumption we have the following 
\begin{proposition}
\label{propo.nondeg}
Fix a value $\bL_0$ of the angular momentum, and assume $\epsilon$ is
small enough; let $\zeta_e^{\epsilon}$ be a relative equilibrium of
the Hamiltonian system $H$ with angular momentum $\bL_0$. If $(R,0)\in
SO(3)\times \Re^3$ is a nondegenerate relative equilibrium for the
rigid system with Hamiltonian $H_{\zeta_e^{\epsilon}}$ then
$\zeta_e^{\epsilon}$ is a nondegenerate relative equilibrium for the
elastic Hamiltonian system with Hamiltonian $H$.
\end{proposition}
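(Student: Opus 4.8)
The plan is to set up a singular perturbation argument in $\epsilon$ around the rigid motion, treating the elastic degrees of freedom as ``fast'' coordinates that are slaved to the ``slow'' rigid coordinates $(R,\chi)$. First I would fix the value $\bL_0$ of the angular momentum and pass to the Marsden--Weinstein reduced system on $\M_{\bL_0}/\G_{\bL_0}$; by Remark \ref{r.11}, a relative equilibrium is a critical point of the reduced Hamiltonian $\H_{\bL_0}$, equivalently (by the theory of relative equilibria, cf. \cite{abrahammarsden}) a critical point of the \emph{augmented Hamiltonian} $\H-\langle\bL-\bL_0,\xi\rangle$ for a suitable Lagrange multiplier $\xi$ (the angular velocity of the relative equilibrium). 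Nondegeneracy of the relative equilibrium then translates into nondegeneracy of the Hessian of this augmented Hamiltonian restricted to a complement of the group orbit inside $\M_{\bL_0}$. So the whole statement reduces to: \emph{the Hessian of $\H-\langle\bL-\bL_0,\xi\rangle$ transverse to the group orbit is an isomorphism}, and I want to deduce this from the analogous statement for the rigid system plus Assumption \ref{suUe}.

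The key step is a block decomposition of the configuration tangent space at $\zeta_e^\epsilon$ into the ``rigid'' directions — those tangent to the finite-dimensional manifold $\{R(\bar\zeta+\chi)\}$ of rigid placements of the frozen shape $\bar\zeta=\zeta_e^\epsilon$ — and the ``shape'' directions transverse to it. Along the shape directions, the Hessian of $\H$ is dominated by the Hessian of $U_e=\frac1\epsilon\tilde U_e$, which by Assumption \ref{suUe} is $\frac1\epsilon$ times a transversally nondegenerate form, hence for $\epsilon$ small it is a large positive-definite isomorphism on those directions, uniformly swamping the $O(1)$ gravitational and kinetic cross terms. Along the rigid directions, the restriction of $\H-\langle\bL-\bL_0,\xi\rangle$ is exactly the augmented Hamiltonian of the rigid system $H_{\zeta_e^\epsilon}$ (by construction, $H_{\bar\zeta}$ is the restriction of $H$ to the rigid phase space), whose transverse Hessian is nondegenerate precisely by the hypothesis that $(R,0)$ is a nondegenerate relative equilibrium. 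A Schur-complement / block-triangular argument then shows that the full transverse Hessian is nondegenerate: the shape block is invertible with inverse $O(\epsilon)$, so the Schur complement onto the rigid block is an $O(\epsilon)$-perturbation of the rigid transverse Hessian, hence still invertible for $\epsilon$ small.

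Two preliminary points must be checked for this to make sense. One must show that a relative equilibrium $\zeta_e^\epsilon$ actually exists near the rigid one and that its shape converges, as $\epsilon\to0$, to a critical shape $\zeta^{(i)}$ of $\tilde U_e$; this follows from the stationarity condition for $\H-\langle\bL-\bL_0,\xi\rangle$, whose shape-component is $\frac1\epsilon\nabla_{\text{shape}}\tilde U_e + O(1)=0$, forcing $\nabla_{\text{shape}}\tilde U_e = O(\epsilon)$, solved by the implicit function theorem thanks to transverse nondegeneracy of $\tilde U_e$. One must also make sure the angular-momentum constraint is handled compatibly in both the rigid and elastic pictures — i.e. that the splitting of the tangent space respects $\ker d\bL$ modulo the orbit — which is routine since the rigid phase space is an $SO(3)$-invariant submanifold and $\bL$ restricts to the rigid angular momentum. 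The main obstacle, and the reason the paper restricts to the formal level, is making the block estimate rigorous in the PDE setting: the shape-block operator is an unbounded elliptic operator (the linearized elasticity operator) rather than a bounded matrix, so ``$O(\epsilon)$ inverse'' and ``swamps the $O(1)$ terms'' require the ellipticity of the elasticity tensor and careful control of the relative boundedness of the gravitational Hessian with respect to it, along the lines of \cite{AP} and \cite{marsdenhughes}; in finite dimensions (Galerkin truncation) this difficulty disappears and the argument above is complete.
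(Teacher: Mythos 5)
Your strategy is at bottom the same Lyapunov--Schmidt splitting the paper uses --- coordinates $R(\xi+\chi)$ adapted to the group orbit as in \eqref{coord.1}, the $\frac1\epsilon$-steep, transversally nondegenerate $\tilde U_e$ of Assumption \ref{suUe} controlling the shape directions, and the rigid nondegeneracy controlling the $(R,\chi,\omega)$ directions --- but you implement it at the level of the second variation (block Hessian of the augmented Hamiltonian plus a Schur complement), whereas the paper stays at the level of the first-order stationarity conditions: it writes them in the split form \eqref{cri.6}, \eqref{cri.7}, \eqref{cri.8} together with \eqref{l=l0}, observes that the $(\chi,\omega,\pi)$ subsystem is exactly the equilibrium system of the rigid body with frozen shape $\zeta_e^{\epsilon}$, hence locally uniquely solvable by the hypothesis, and solves the shape equation \eqref{cri.7} by the implicit function theorem as a perturbation of $d_\xi\tilde U_e=0$. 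Your version yields a stronger conclusion (invertibility of the transverse Hessian, hence isolatedness a fortiori) through a clean quantitative mechanism (shape block with inverse of size $O(\epsilon)$); the paper's version uses only first derivatives and lands exactly on the topological statement it defined. Your preliminary ``existence'' step is not needed, since $\zeta_e^{\epsilon}$ is assumed to be a relative equilibrium.

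The one genuine mismatch is with the paper's notion of nondegeneracy. Definition \ref{nondeg} is topological: a relative equilibrium is nondegenerate if it is not an accumulation point of relative equilibria with the same angular momentum. So the hypothesis that $(R,0)$ is a nondegenerate relative equilibrium of $H_{\zeta_e^{\epsilon}}$ hands you isolatedness of the rigid equilibrium, not invertibility of the rigid block of the Hessian; your Schur-complement step, which requires that block to be an isomorphism, therefore invokes a strictly stronger reading of the hypothesis than the one stated (while also proving more than is asked). To match the proposition as written you should either argue as the paper does --- use the hypothesis only as local uniqueness for the subsystem \eqref{cri.6}, \eqref{cri.8}, \eqref{l=l0} and couple it with the implicit function theorem applied to \eqref{cri.7} --- or state explicitly that you interpret ``nondegenerate'' in the Hessian sense (which is what holds in the rigid-body examples of \cite{teixidoroman}, and which makes the $O(\epsilon)$ back-coupling of the shape onto the rigid equations harmless). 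Two cosmetic points: call the Lagrange multiplier $\omega$, since the paper reserves $\xi$ for the shape coordinate; and Assumption \ref{suUe} gives transversal nondegeneracy, not positive definiteness, of the shape block --- invertibility is all your Schur complement needs, so drop ``positive-definite''.
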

 
\begin{proof} The proof is based on ideas from Lyapunov-Schmidt
  decomposition (see e.g. \cite{AP}).
First of all it is useful to introduce suitable coordinates in $\Q$ in a
neighborhood of $\zeta_e$ (following the ideas of \cite{bambusihaus}). They
are constructed as follows: let $\Sigma\subset \Q$ be a codimension 6 affine
subspace transversal to the group orbit $\G_{\zeta_e}$, then a suitable
set of coordinates about $\zeta_e$ is locally obtained by the map
\begin{equation}
\label{coord.1}
\Sigma\times  SO(3)\times \Re^3\ni (\xi,R,\chi)\mapsto R(\xi+\chi)\ .
\end{equation}

We now recall (and adapt to the present situation) some results of
\cite{SPM91}. Using the method of Lagrange multipliers one immediately
sees that the relative equilibria can be obtained  by finding the
critical points of $H-\omega\cdot(\bL-\bL_0)$ under the
condition  
\begin{equation}
\label{l=l0}
\bL=\bL_0\ .
\end{equation} Here $\omega$ is the Lagrange multiplier. In
\cite{SPM91} it was shown that this is equivalent to finding the critical
points of the ``augmented Hamiltonian'' $H_{\bL_0}$ defined by 
\begin{align}
\label{sys.cri}
H_{\bL_0}:=K_{\bL_0}+U_{\bL_0}\ ,
\\
\label{sys.cri1}
K_{\bL_0}:=\frac{1}{2}\int_{\B}\frac{\left|\pi-\rho_0(\omega\times
  \zeta)\right|^2}{\rho_0}d^3\bx\ ,
\\
\label{sys.cri2}
 U_{\bL_0}:=U-\frac{1}{2}\int_{\B}\rho_0\left|\omega\times
 \zeta\right|^2 d^3\bx
\end{align}
again under the condition \eqref{l=l0} (actually in the present case
this is a straightforward computation). As pointed out in \cite{SPM91}
the interest of this formulation is that the equations for the
critical points of $H_{\bL_0}$ take the form
\begin{align}
\label{cri.3}
\pi_e=\rho_0\omega\times \zeta_e\ ,
\\
\label{cri.4}
\nabla U(\zeta_e)+\rho_0\omega\times (\omega\times\zeta_e)=0\ .
\end{align}
In particular the second equation is independent of $\pi$. This allows
to study separately \eqref{cri.4}. To this end we use the set of
coordinates \eqref{coord.1}. However one has to pay attention to the
fact that in general the section $\Sigma$ is not orthogonal to the
group orbit, so first we rewrite \eqref{cri.4} in
the (original) dual form:
\begin{equation}
\label{cri.5}
d U(\zeta_e) h + \left\langle \rho_0\omega\times
(\omega\times\zeta_e);h  \right\rangle_{L^2}=0\ ,\quad \forall h\in\Q\ ,
\end{equation}
which in terms of the coordinates \eqref{coord.1} takes the form (at $R=I$)
\begin{align}
\label{cri.6}
d_\chi U_g(\xi_e+\chi_e) h_{\chi} &+ \left\langle \rho_0\omega\times
(\omega\times(\xi_e+\chi_e);h_\chi \right\rangle_{L^2}=0\ ,\quad \forall
h_\chi\in\Re^3\ , 
\\ 
\label{cri.7}
d_\xi
[U_g(\xi_e+\chi_e)&+U_{sg}(\xi_e+\chi_e)+\frac{1}{\epsilon}\tilde U_e(\xi_e+\chi_e)]
h_\xi 
\\
\nonumber
&+ \left\langle \rho_0\omega\times (\omega\times(\xi_e+\chi_e));h_\xi
\right\rangle_{L^2}=0\ ,\quad \forall h_\xi\in T_{\zeta_e}\Sigma\ .
\end{align}
which have to be solved together with 
\begin{equation}
\label{cri.8}
\pi_e=\rho_0\omega\times (\xi_e+\chi_e)\ ,
\end{equation}
and the condition \eqref{l=l0}. In particular the system
\eqref{cri.6}, \eqref{cri.8}, \eqref{l=l0} is identical to the system
for the reduced equilibrium of the rigid body with shape $\zeta_e$, so
by Assumption \ref{suUe} it determines uniquely (up to a finite choice) $\chi_e$ and $\omega$ (and
$\pi_e$). We analyze now \eqref{cri.7}. Of course it is a perturbation
of $d_{\xi}\tilde U_e=0$, whose critical points are all nondegenerate
(as functions of $\xi$ they are nondegenerate in the standard sense),
so, by the implicit function theorem a solution $\xi_e$ of
\eqref{cri.7} must be close to a critical point of $\tilde U_e$ and is
also nondegenerate. This concludes the proof of the proposition.
\end{proof}

So we have reduced the problem of checking nondegeneracy to the
problem of checking nondegeneracy of the relative equilibria for the
motions of rigid bodies. This problem has been studied for example in
\cite{teixidoroman} who obtained a complete characterization of the
relative equilibria of a triaxial rigid body, provided the
gravitational potential is approximated by its quadrupole expansion. In
\cite{teixidoroman} the author obtained that, provided the distance
$\chi_e$ of the center of mass from the planet is large enough, there
are exactly 24 families of stationary points of the reduced
system. These stationary points are such that the principal axes of
inertia are one pointing to $M$ and a second one in the plane
orthogonal to the plane of motion. The number 24 appears as the number
of possible choices of the orientations of the body with prescribed
principal axes of inertia.

Remark that in particular it turns out that such critical points are
nondegenerate. Furthermore we expect that, if $\chi_e$ is large
enough, then it should be possible to use the implicit function
theorem to prove that the critical points are nondegenerate also for
the system in which the gravitational potential is not subjected to
any approximation.

\begin{remark}
\label{bamhaus}
In \cite{bambusihaus} a problem is studied in which Assumption
\ref{suUe} is violated due to the fact that the satellite is assumed to
have a spherically invariant reference equilibrium configuration. In
that case it was shown that actually the orbit is not asymptotic to a
single synchronous orbit; nevertheless, one can still say that also in
that case the orbit is asymptotic to a synchronous resonance, but the
principal axes of inertia could rotate in the satellite.  
\end{remark}


\end{document}